\newcommand{\sched}{\sigma}
\newcommand{\plan}{\mathit{intent}}
\newcommand{\noconflict}{\mathcal{O}}
\newcommand{\Agents}{\mathit{Ag}}
\newcommand{\I}{\mathcal{I}}
\newcommand{\Nat}{\mathbb{N}}
\newcommand{\adversary}{\alpha}
\newcommand{\arrival}{\tau}
\newcommand{\F}{\mathcal{F}}
\newcommand{\R}{\mathcal{R}}
\newcommand{\histories}{\mathcal{H}}
\newcommand{\nofail}{\mathit{NF}}
\newcommand{\CR}{\mathit{CR}}
\newcommand{\SO}{\mathit{SO}}
\newcommand{\GO}{\mathit{GO}}
\newcommand{\stg}{\mathit{safe\text{-}to\text{-}go}}
\newcommand{\pos}{\mathit{pos}}
\newcommand{\lane}{\mathit{lane}}
\newcommand{\go}{\mathtt{go}}
\newcommand{\envstates}{L_e}
\newcommand{\sensormodel}{\mathcal{S}}
\newcommand{\Time}{\mathit{time}}
\newcommand{\sense}{\mathscr{S}}
\newcommand{\sensorreading}{\Sigma}
\newcommand{\memory}{\mathit{Mem}}
\newcommand{\initmemory}{\mathit{Mem}^{init}}
\newcommand{\msg}{m}
\newcommand{\gotime}{\mathit{gotime}}
\newcommand{\front}{\mathit{front}}
\newcommand{\going}{\mathit{going}}
\newcommand{\kbp}{\mathbf{P}}
\newcommand{\noop}{\mathtt{noop}}
\newcommand{\exchange}{\mathcal{E}}
\newcommand{\rimp}{\Rightarrow}
\newcommand{\dimp}{\Leftrightarrow}
\newcommand{\Lanes}{\mathcal{L}}
\newcommand{\Lanesin}{\mathcal{L}_\mathit{in}}
\newcommand{\Lanesout}{\mathcal{L}_\mathit{out}}
\newcommand{\lead}{\mathit{next}}
\newcommand{\commentout}[1]{}
\newenvironment{program}[1][ht]
  { %
   \begin{algorithm}[#1]
  }{\end{algorithm}}
\newcommand{\citeyear}{\cite}
\newcommand{\canhear}{T}
\newcommand{\first}{first}
\newcommand{\poss}{\mathit{Pos}}
\title{A Knowledge-Based Analysis of Intersection Protocols}
\author{Kaya Alpturer}%
       {Princeton University, USA}%
       {kalpturer@princeton.edu}%
       {https://orcid.org/0000-0003-4843-883X}%
       {Research supported by AFOSR grant FA23862114029.
       Work done in part while the author was studying at Cornell University.}%
\author{Joseph Y. Halpern}%
       {Cornell University, USA}%
       {halpern@cs.cornell.edu}%
       {https://orcid.org/0000-0002-9229-1663}%
       {Research supported in part by AFOSR grant
                FA23862114029, NSF grant FMitF 2319186, ARO grant
                W911NF-22-1-0061, and MURI grant W911NF-19-1-0217.}
\author{Ron van der Meyden}%
       {UNSW Sydney, Australia}%
       {R.VanderMeyden@unsw.edu.au}%
       {https://orcid.org/0000-0002-9243-0571}%
       {The Commonwealth of Australia (represented by the Defence Science and
       Technology Group) supported this research through a Defence Science
       Partnerships agreement.}%
\authorrunning{K. Alpturer, J. Halpern, and R. van der Meyden}
\keywords{Intersection management, Autonomous vehicles, Distributed algorithms, Epistemic logic, Fault tolerance}
\begin{document}

\maketitle

\begin{abstract}
  The increasing wireless communication capabilities of vehicles creates
  opportunities for more efficient intersection management strategies. One
  promising approach is the replacement of traffic lights with a system wherein
  vehicles run protocols among themselves to determine right of way. In this
  paper, we define the \emph{intersection problem} to model this scenario
  abstractly, without any assumptions on the specific structure of the
  intersection or a bound on the number of vehicles. Protocols solving the
  intersection problem must guarantee safety (no collisions) and liveness (every
  vehicle eventually goes through). In addition, we would like these protocols
  to satisfy various optimality criteria, some of which turn out to be
  achievable only in a subset of the contexts. In particular, we show a partial
  equivalence between eliminating unnecessary waiting, a criterion of interest
  in the distributed mutual-exclusion literature, and a notion of optimality
  that we define called \emph{lexicographical optimality}. We then introduce a
  framework to design protocols for the intersection problem by converting an
  \emph{intersection policy}, which is based on a global view of the
  intersection, to a protocol that can be run by the vehicles through the use of
  knowledge-based programs. Our protocols are shown to guarantee safety and
  liveness while also being optimal under sufficient conditions on the context.
  Finally, we investigate protocols in the presence of faulty vehicles that
  experience communication failures and older vehicles with limited
  communication capabilities.
  We show that intersection protocols can be made safe, live and optimal
  even in the presence of faulty behavior.
\end{abstract}

\section{Introduction}
Traffic lights can slow down traffic significantly, due to their lack of
responsiveness to real-time traffic. If
vehicles can communicate with each
other
(which is already quite feasible with today's
wireless
technology), the door is open for improved protocols, where vehicles can
determine right of way  among themselves, depending on traffic
conditions, and thereby significantly increase throughput at an intersection.
In this paper, we formally define the
\emph{intersection problem}: we assume that agents can communicate
with each other via radio broadcasts, and design protocols that take
advantage of this communication to allow
agents to go through the intersection while satisfying \emph{safety}
(no collisions)
and \emph{liveness} (every vehicle
eventually goes through).
In addition, we consider \emph{optimal} protocols, which means,
roughly speaking, that the protocol allows as many vehicles as
possible to go through the intersection at any given time.
Finally, we consider the extent to which we can tolerate communication
failures and (older) vehicles that are not equipped with wireless, so
cannot broadcast messages.  (It turns out that these two
possibilities can be dealt with essentially the same way.)

While the inefficiencies of traffic-light-based intersection
management have long been recognized \cite{ds08},
prior approaches have mainly focused on specific
intersection scenarios \cite{ciscav21,ssp17} or relied on executing
leader-election protocols without considering communication failures
\cite{fvpbk13,ffcvt10}. Furthermore, the protocols have often been
evaluated based
on simulations of specific intersections, rather than being proved
correct \cite{ffcvt10,ciscav21}. Given the
implications of this problem for traffic safety, as well as its potential for
greatly improving energy efficiency and productivity, there is a need for
formal guarantees on both correctness and optimality.

To the best of our knowledge, prior work did not consider optimality,
especially in the presence of various faults.
In designing these protocols,
to the extent possible,
we want them to be robust to a variety of
communication failures, such as contexts with \emph{crash} failures,%
\footnote{We follow the distributed-algorithms literature's
  interpretation of ``crash failure'' here: it is not meant
  to imply a physical collision.}
where an
agent may
fail by ceasing
to participate in the protocol at a given time, and
\emph{omission} failures, where arbitrary messages can fail to be broadcast.

Epistemic logic
has been shown to provide a high-level abstraction that
can be used
to design
distributed
protocols independent of particular assumptions on the communication
environment
and type of failures \cite{FHMV}.
Most analyses of distributed-computing problems that use epistemic
logic have
used full-information protocols to
derive time-optimal algorithms, at the cost of large message size and memory requirements.
Given the
limitations of wireless
networks, it is also desirable to bound the amount of information that
needs to be exchanged between  agents, while still ensuring that the
formal guarantees are still met.
To address this, following \cite{AHM23},
we separate the part of
the protocol that determines
what information is exchanged between the agents, and the part that
determines what action to take based on the agent's information.
Thus, when we consider optimality, we do so with respect to protocols
that
limit information exchange in the same way.

We model the intersection problem as the following scenario.
There is a (possibly infinite) set of agents $\Agents \subseteq \Nat$.
The intersection has
$\ell$ lanes, represented by $\Lanes = \{0,\dots,\ell-1\}$.
The set of lanes is partitioned into a set of lanes
$\Lanesin = \{0, \ldots, k-1\}$, where $1 < k < \ell$
by which vehicles approach the intersection,
and a set of lanes $\Lanesout$
by which they depart from the intersection.
Each lane in $\Lanesin$ has a queue of agents waiting to go through
the intersection;
at each point in time
at most one agent
arrives at each of these queues.
A \emph{move}
through the intersection is represented by a pair
$(l_s, l_t) \in \Lanesin\times\Lanesout.$
Intuitively, executing $(l_s,l_t)$ means that the agent
arrives through lane $l_s$ and departs through lane $l_t$.
The symmetric relation $\noconflict \subseteq (\Lanesin \times \Lanesout)^2$
describes
which moves of the agents are compatible; $((l_s,l_t),(l'_s,l'_t)) \in
  \noconflict$
means that both $(l_s,l_t)$ and $(l'_s,l'_t)$ can be executed in the
same round.
Broadcasts have a limited range, given by  $\rho > 0$.
We assume that, provided there are no failures, all broadcasts sent by
an agent $i$  will be received by all
agents that are within a distance $\rho$  of $i$.
The problem is then to maximize the rate at which cars move through
the intersection while guaranteeing safety (it is never the case that
agents with incompatible moves go through the intersection
simultaneously) and liveness (all agents that arrive at the
intersection eventually move through it).
The problem can be thought of as a generalization of distributed
mutual exclusion, where the intersection is the critical section.
\begin{figure}
  \centering
  \includegraphics[width=0.33\textwidth]{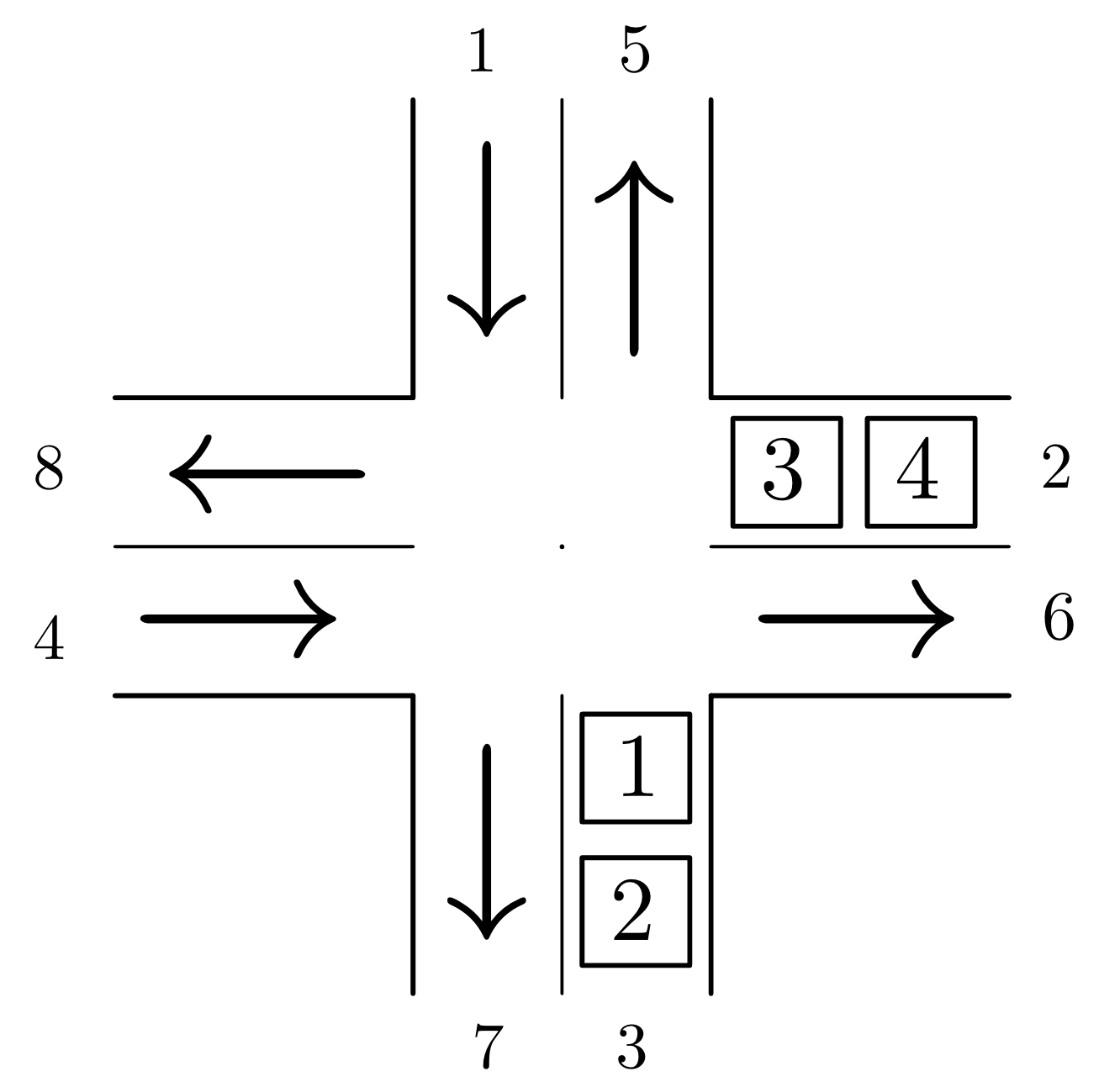}
  \caption{An intersection with $\Lanes = \{1,\dots,8\}$ where
    $\Lanesin = \{1,\dots,4\}$ and $\Lanesout = \{5,\dots,8\}$.
    There are currently 4 agents that have arrived in incoming lanes 2 and 3.}
  \label{fig:model}
\end{figure}

The rest of the paper is organized as follows: In
Section~\ref{sec:rak}, we briefly
review the knowledge-based framework of \cite{FHMV}. In Section~\ref{sec:iep},
we modify the information-exchange model of \cite{AHM23} and introduce the
sensor model. Section~\ref{sec:adv} defines models for the adversary which
determine the arrival schedule of vehicles and communication failures.
Section~\ref{sec:context} combines the information-exchange and the adversary
model, fully specializing the general model of Section~\ref{sec:rak} to
intersections. Section~\ref{sec:unnecessary-waiting} introduces the various
notions of optimality we care about such as eliminating unnecessary waiting and
lexicographical optimality. In Section~\ref{sec:intersection-policies},
intersection policies are introduced as a global view of the intersection.
Section~\ref{sec:optimal} proves a construction that results in an optimal
policy even with failures, and explores applications of the construction
in two limited-information contexts.
Section~\ref{sec:discussion} concludes with a discussion on connections to
distributed mutual exclusion.
We defer most proofs to the appendix.

\section{Reasoning about knowledge} \label{sec:rak}
In order to reason about the knowledge of the vehicles in the intersection
problem, we use the standard runs-and-systems model \cite{FHMV}.
An interpreted system $\I = (\R,\pi)$ consists of
a system $\R$, which is a set of runs, and
an \emph{interpretation} $\pi : \R \times \Nat \to
  \mathcal{P}(\mathit{Prop})$.
Each \emph{run} $r : \Nat \to L_e \times \Pi_{i \in \Agents} L_i$
describes a particular infinite execution of the system
where $r(m)$ is the global state of the system in run $r$ at time $m$.
The global states consist of an environment state drawn from $L_e$ and
local states for each agent $i$ drawn from each $L_i$.
The local state of agent $i$ at point $(r,m)$ is denoted $r_i(m)$.
We call a run and time pair $(r,m)$ a \emph{point}.
The interpretation $\pi$ describes which atomic
propositions hold
at each point in a system $\R$.

We write $\I,(r,m) \models \phi$ if the
formula $\phi$ holds
(is satisfied)
at point $(r,m)$ in interpreted system $\I$.
A formula $\phi$ is \emph{valid in an interpreted system $\I$},
denoted $\I \models \phi$, if $\phi$
holds at
all points in $\I$; the formula $\phi$ is \emph{valid} if it is valid
in all interpreted systems.
Satisfaction of formulas is
inductively defined as follows:
\begin{itemize}
  \item $\I,(r,m) \models p$ iff $p \in \pi(r,m)$.
  \item $\I,(r,m) \models \phi \land \phi'$ iff
        $\I,(r,m) \models \phi$ and $\I,(r,m) \models \phi'$.
  \item $\I,(r,m) \models \neg\phi$ iff $\I,(r,m) \not\models \phi$.
  \item $\I,(r,m) \models K_i\phi$ iff $\I,(r',m') \models \phi$ for all points
        $(r',m')$ such that $r_i(m) = r'_i(m')$.
  \item $\I,(r,m) \models \Diamond \phi$ iff for some $m' \geq m$, $\I,(r,m') \models \phi$.
  \item $\I,(r,m) \models \bigcirc \phi$ iff $\I,(r,m+1) \models \phi$.
\end{itemize}
Agent $i$ \emph{knows} a formula $\phi$ at $(r,m)$ if
$\I,(r,m) \models K_i\phi$.
Intuitively, agent $i$ knows $\phi$ if $\phi$ holds at all points
where
agent $i$ has the same local state. We say that agent $i$ considers the point
$(r',m')$ \emph{possible} at point $(r,m)$ if $r_i(m)=r'_i(m')$.
The relation $\sim_i$ is defined as $(r,m) \sim_i (r',m')$ iff $r_i(m) = r'_i(m')$.
The formula $\bigcirc \phi$ means that $\phi$ holds at the next time, and $\Diamond \phi$ means that $\phi$ holds eventually.
In later sections, we formalize how interpreted
systems for the intersection problem are specified.

\section{Information-exchange protocols} \label{sec:iep}

Our framework for modeling limited information exchange is similar to
that used by Alpturer et al. \citeyear{AHM23} to analyze consensus protocols,
but we make a number of changes due to the differences in our setting.
Here, global states represent not just the result of
messages sent between the agents, but also facts about a changing
external world, from which the agents  obtain sensor readings
(e.g., information about
their
own position and that of nearby vehicles,
from GPS, visual, lidar, or radar sensors).
We modify the definition of  information-exchange protocols
from \cite{AHM23} to accommodate these sensor readings.
Specifically, assume that we are given a set $\envstates$ of
environment states. Define a \emph{sensor model}
for $\envstates$ to  be a collection
of mappings
$\sensormodel = \{\sense_i\}_{i \in \Agents}$, where $\sense_i:\envstates \rightarrow \sensorreading_i$
maps states of the environment to a set $\sensorreading_i$ of
possible
sensor readings for agent $i$.

An \emph{information-exchange protocol} $\exchange$ for agents $\Agents$
and sensor model $\sensormodel$
is
given by the
collection
$\{\exchange_{i}\}_{i\in\Agents}$ consisting of a  local
information-exchange protocol $\exchange_i$ for each agent $i$.  Each
local information-exchange protocol  $\exchange_i$ is a tuple
$\langle L_i, \initmemory_i , A_i, M_i, \mu_i, \delta_i\rangle$, where
\begin{itemize}
  \item
        $L_i = \memory_i \times \sensorreading_i$
        is a set of local states,
        where each local state consists of a memory state from a set
        $\memory_i$ and a sensor reading from
        $\sensorreading_i$;
  \item $\initmemory_i \subseteq \memory_i$ is a set of initial memory states.
        (Typically, there might be a single initial memory state, containing
        information such as the agent's identity.)
  \item $M_i$ is
        the
        set of messages that can be sent by agent $i$;
  \item $\mu_i : L_i \times A_i \times \sensorreading_i \rightarrow M_i \cup \{\bot\}$ is a function
        mapping a local state $s$, an action $a$, and a sensor reading
        $o$ to the message
        to be broadcast (intuitively, $\mu_i(s,a,o)= \msg$ means that when agent $i$ performs
        action $a$ in state $s$ and obtains new sensor reading $o$,
        the information-exchange protocol broadcasts
        the message $\msg$ to the other agents; if $\msg= \bot$, then no  message is sent by $i$);

  \item $\delta_i: L_i \times A_i \times   \mathcal{P}(\cup_{j \in \Agents} M_j) \rightarrow \memory_i$
        is a function that updates the local memory as a function of the
        previous local state
        (comprised of the previous memory state and the previous sensor reading), an action, and a set of messages
        received.

\end{itemize}

An \emph{action protocol} $P$ for an information-exchange protocol $\exchange$,
is a tuple $\{P_{i}\}_{i\in\Agents}$ containing, for each agent $i$, $P_i : L_i
  \rightarrow A_i$ mapping the local states $L_i$ for agent $i$ in $\exchange_{i}$
to actions in $A_i$.

\section{Adversary model} \label{sec:adv}

Intersection protocols need to operate in an environment with several
forms of nondeterminism:
how messages are broadcast through the environment, failures of transmitters and receivers,
and the arrival pattern of vehicles. We model these aspects of the
environment in terms of an adversary.

The precise physics of the intersection may affect how broadcasts are transmitted through the environment.
Rather than attempt to model Euclidean distances and obstacles, we abstract the effects of these factors on
transmission. A \emph{transmission environment} is a relation $\canhear \subseteq (\Lanesin\times \Nat)^2$. Intuitively,
$((\ell,p), (\ell',p'))\in \canhear$ represents that, provided the agents' transmitters and receivers do not fail,
a message broadcast by an agent at position $p$ in lane $\ell$, will be received by an agent at position $p'$ in lane $\ell'$.
Transmission environments encode our assumption that the communication range is $\rho$.
We make one assumption about this relation: that for all
$\ell,\ell'\in \Lanesin$, we have $((\ell,0),(\ell',0))\in T$.
That is, messages broadcast by an agent at the front of
some lanes are received
(barring failure)  by all agents that are at the front of
any lane.

An \emph{adversary model} $\F$ is a set of adversaries;
formally, an \emph{adversary}
is a tuple
$\adversary = (\arrival, \canhear, F_t, F_r)$, where
$\arrival : \Agents
  \to \Nat \times \Lanesin \times \Lanesout$,
$\canhear$ is a transmission environment,
$F_t: \Nat\times \Agents \rightarrow \{0,1\}$, and
$F_r: \Nat \times \Agents \rightarrow \{0,1\}$.
Intuitively,
$\arrival$
is an \emph{arrival schedule}, which
describes when each
agent arrives in the system (i.e., enters a queue), its lane of
arrival, and its intended departure lane.
The function
$F_t$ represents failures of agents' transmitters and
the function
$F_r$
represents failures of agents' receivers.
$F_t(k,i) = 1$ means that if $i$ tries to broadcast in round
$k+1$ (i.e., between time $k$ and time $k+1$), then the broadcast will
be sent to all agents within range (i.e., within $\rho$ of $i$), and
perhaps others; similarly,
$F_r(k,j) = 1$ means that $j$ receives all broadcasts sent in round
$k+1$ by agents within range (but again, it may receive other
broadcasts as well).
Thus, a broadcast by agent $i$
in round $k+1$ is received by a $j$ within range of $i$ in round
$k+1$ iff $F_t(k,i) = F_r(k,j)
  = 1$.
The function $\tau$ describes when agents arrive in the system (which
we assume is under the control of the adversary).
In more detail, if
$\tau(j)= (k, (l_1,l_2))$, then at time $k$, agent
$j$ arrives in the system on lane $l_1$ with the intention of
departing on lane $l_2$. We assume that $\arrival$ is
\emph{conflict-free} in the sense that, for
all agents
$i \neq j$, if $\tau(i) =
  (k,(l_1,l_2))$ and $\tau(j) = (k,(l'_1,l'_2))$, then $l_1 \neq l'_1$. This
ensures that we do not have a conflict of two agents wanting to enter the same
queue for lane $l_1$ simultaneously.
(Exactly how this mutual exclusion of queue entry is assured is outside the scope of the model.
One way that it may come about is that vehicles approaching the intersection are already ordered along an
approaching lane.)

We consider adversary models that involve the following types of failures:
\begin{itemize}
  \item No failures ($\nofail$): the set of all adversaries
        $(\tau,\canhear,F_t,F_r)$
        where $F_r(k,i) = F_t(k,i) = 1$ for all $i \in
          \Agents$ and $k \in \Nat$.
  \item Crash failures ($\CR$): the set of all
        adversaries
        $(\tau, \canhear, F_t, F_r)$ where
        for all $i \in \Agents$ and $k \in
          \Nat$,
        (1)
        $F_t(k,i) = 0$ implies $F_t(k',i) = 0$ for all $k'>k$,
        and (2) $F_r(k,i) = 1$ for all $k$ and $i$.
  \item Sending omissions ($\SO$): the set of all
        adversaries
        $(\tau, \canhear, F_t, F_r)$ where
        for all $i \in \Agents$ and $k \in \Nat$,  $F_r(k,i)=1$.
\end{itemize}

An adversary model $\F$ has a \emph{fixed transmission
  environment} if all adversaries in $\F$
include the same transmission environment $\canhear$.
We believe that our techniques can be applied without change to the
general omissions case.

\section{Intersection Contexts} \label{sec:context}
A \emph{context} is a triple $(\exchange, \F, \pi)$ consisting of an
information-exchange protocol $\exchange$, an adversary model $\F$,
and an interpretation $\pi$.
To deal with intersections, we restrict
information-exchange protocols and interpretations so that they satisfy
certain conditions.
$(\exchange, \F, \pi)$ is an \emph{intersection context} if it
satisfies the following conditions:
\begin{itemize}
  \item
        The set of environment states $\envstates$ consists of states of the
        form $s_{e} = (\alpha, t, q_1, \dots, q_{|\Lanesin|}, \mathit{done})$
        where  $\alpha \in \F$ is an adversary, $t\in \Nat$ is a time,
        for each approach lane $l\in \Lanesin$,
        $q_l$ is a queue (list) of agents, intuitively the ones who have
        lane $i$ and not yet departed,
        and a set
        $\mathit{done} \subseteq \Agents$, representing
        the agents that have already passed through the
        intersection.
  \item
        The sensor model, in principle, could be defined to include
        information from a large variety of sensors
        and information sources, such as GPS,  in-road or road-side beacons,
        lidar, radar, or vision systems.
        We start with a minimal location-based sensor model, and leave it open
        for other fields to be added. Our minimal sensor model $\sensormodel = \{\sense_i\}_{i \in \Agents}$
        is defined so that  the sensor function $\sense_i$ maps environment
        states to tuples of the form
        $\langle \front_i,\lane_i , \plan_i \rangle $, where  $\front_i \in \{0,1\}$,
        $\lane_i  \in \Lanesin \cup \{\bot, \top\}$, and $\plan_i \in \Lanesout$.
        For $s_{e} = (\alpha, t, q_0, q_1, \dots, q_{|\Lanesin|}, \mathit{done})$,
        we have $\sense_i(s_e) = \langle \front_i,\lane_i, \plan_i \rangle$,
        where if $\tau$ is the arrival schedule in the adversary $\alpha$,
        \begin{itemize}
          \item $\pos_i$ maps from global states to $\Nat\cup\{\bot,\top\}$;
                $\pos_i(s_e) = \top$ if $i \in \mathit{done}$, $\pos_i(s_e) = k$ if
                there exists a queue $\ell$ such that
                $i$ is the $k$th position in queue $q_\ell$ (with the front of the
                queue counted as position 0),
                and $\pos_i(s_e) = \bot$ otherwise. (It follows from the state
                dynamics given below that
                $i$ is in at most one queue, so $\pos_i$ is well-defined.)
          \item $\front_i = 1$ iff $\pos_i(s_e) = 0$,
          \item if $i$ is in the queue $q_\ell$ for lane $\ell$,  then $\lane_i =
                  \ell$; if $i \in \mathit{done}$ then $\lane_i= \top$;
                and if $i \not \in \mathit{done}$ then $\lane_i = \bot$.
          \item
                if $\tau(i) = (k,(l,l'))$ then $\plan_i = l'$.
        \end{itemize}
        We have modelled an agent's intended departure lane
        $\plan_i$ as being received from
        the environment
        since, from the point of view of protocol design, this is part of the adversary.
  \item           The set of possible actions of agent $i$ in $\exchange_i$ is
        $A_{i} = \{\go, \noop\}$.
        Intuitively, $\go$
        represents that action of the agent making its planned move through the intersection. This action
        can be performed by agent $i$ only
        if $i$ is at the front of its queue.
        The action $\noop$ represents that the agent does not move,
        unless it is either scheduled for arrival in some queue, or
        in some position in a queue but not at the front, and the
        position before it is being vacated, in which case it
        advances in the queue.
  \item
        A global state is a tuple of the form
        $(s_{e},\{s_{i}\}_{i\in\Agents})$, where $s_e \in L_e$ and
        $s_i\in L_i$ for each agent $i\in \Agents$.
        An \emph{initial} global state  has
        \begin{itemize}
          \item
                $s_{e} = (\alpha, t, q_1, \dots, q_{|\Lanesin|}, \mathit{done})$,
                where $t=0$, each queue $q_l$ is empty, and
                $\mathit{done}$
                is the empty set, and
          \item for each agent $i\in \Agents$, the local state $s_i = (m_i,\sense_i(s_e))$ where $m_i \in  \initmemory_i$ is an initial memory state.
        \end{itemize}

  \item $\pi$ interprets the following
        atomic propositions
        based on the
        global state in the
        obvious way:
        $\front_i$, $\lane_{i} = l$ for $l \in \Lanesin$,
        $\plan_{i} = l$ for $l \in \Lanesout$,
        $\pos_{i} = k$ for $k \in \Nat \cup \{\bot,\top\}$.
\end{itemize}

Given an intersection  context $\gamma=(\exchange,\F,\pi)$ and a protocol $P$,
we construct an interpreted system $\I_{\gamma,P} = (\R_{\exchange,\F,P},\pi)$
representing all the possible behaviours of the protocol $P$ in context
$\gamma$.
The set $\R_{\exchange,\F,P}$ of runs
consists of all
runs $r$ that satisfy the following properties:
\begin{itemize}
  \item The initial state
        $r(0)$ of $r$ is an initial global state.
  \item
        For each $k\in \Nat$, the global state
        $r(k+1)=(s'_{e},\{s'_{i}\}_{i\in\Agents})$ is determined from
        $r(k)=(s_{e},\{s_{i}\}_{i\in\Agents})$
        by a procedure in which the order of events is as follows. First, the
        agents decide their actions (to go through the intersection or
        not). They then perform these actions, causing the
        queues to be updated; any newly arriving agents are also added to the
        queues in this step. The agents then take a sensor reading, from which
        they obtain new information about their position. This new information
        may be included in the message that an agent broadcasts.  Finally,
        each agent updates its memory state, based on their previous local
        state, the action performed, and the messages that were broadcast in
        the current round and received by the agent. We then proceed to the
        next round. Formally, state transitions are determined by the
        following procedure:
        \begin{itemize}
          \item First, each agent $i$ determines its action $P_i(s_i)$ according to the protocol $P$.
          \item
                If $s_e = (\alpha,m,q_1,\dots,q_{|\Lanesin|},\mathit{done})$, then we take $s'_e = (\alpha,  m+1, q'_1, \dots, q'_{|\Lanesin|}, \mathit{done}')$,  defined as follows. Note that the adversary $\alpha$ is the same in $s_e'$, and the time $m$ is incremented. Each queue $q'_\ell$ is obtained from $q_\ell$ by the following operations:
                \begin{itemize}
                  \item If $q_\ell(0) = i$ and $P_i(s_i) = \go$, then let $q''_\ell$ be the result of dequeueing agent $i$ from $q_\ell$.
                        Otherwise $q''_\ell = q_\ell$.
                  \item If $\arrival(i) = (m+1,(l_1,l_2))$ for any agent $i$, then we define
                        $q'_\ell = \mathit{enqueue}(i,q''_\ell)$, otherwise $q'_\ell= q''_\ell$.
                        (Recall that such an $i$ is unique, by assumption on $\arrival$.)
                \end{itemize}
          \item
                Finally, we take $\mathit{done}'$ to be the result of adding
                to the set  $\mathit{done}$ all agents $i$ who were at the front of
                any queue in $s_e$ such that $P_i(s_i) = \go$.
        \end{itemize}
  \item Next, for each agent $i$, we obtain a new sensor reading
        $\sense_i(s_e')$ of the updated state $s_e'$ of the environment.
        Using this sensor readings, each agent $i$ constructs the message
        $m_i = \mu_i(s_i,P_i(s_i),\sense_i(s_e'))$, which it
        broadcasts.
  \item For each agent $i$, we determine the set of messages $B^m_i$ that the agent
        receives in round $m+1$.
        If agent $i$ is not in any queue in state $s'_e$, or $F_r(m,i) = 0$ (agent $i$'s receiver fails in round $m+1$)
        then $B^m_i = \emptyset$.
        Otherwise, for each agent $i$ that is in a lane queue, let $\ell_i$ be the lane it is in and $p_i$ its
        position in the queue. We define
        $B^m_i $ to be the set of messages $m_j$ for which both
        $((p_j, \ell_j),(p_i,\ell_i)) \in T$
        ($j$'s transmission can be heard by agent $j$, given their positions) and $F_t(m,j) = 1$
        ($j$'s transmitter does not fail in this round.)

        \newcommand{\mem}{u} %
  \item Finally, if $s_i = (\mem_i,\sense_i(s_e))$, then $s'_i = (\mem'_i,\sense_i(s'_e))$,
        where
        $\mem'_i = \delta_i(s_i,P_i(s_i),B^m_i)$.
        (Note that we use the old sensor reading $\sense_i(s_e)$ to determine the new memory state, but
        not the new sensor reading $\sense_i(s'_e)$, since the latter will be
        visible to the agent in its new local state $s'_i$.)
\end{itemize}

$P$ is an \emph{intersection protocol}
for context $\gamma= (\exchange,\F,\pi)$
if the following are valid in $\I_{\gamma,P}$ for all $i,j \in
  \Agents$ where $i \neq j$, where
$\going_{i}$ is an abbreviation for $\front_i \land \bigcirc
  \neg \front_i$.
\begin{itemize}
  \item \textbf{Validity:}
        $\going_i \Rightarrow
          \front_i$.
  \item \textbf{Safety:} $(\going_{i} \land \going_{j}) \Rightarrow
          ((\lane_{i},\plan_{i}), (\lane_{j},\plan_{j})) \in \noconflict$.

  \item \textbf{Liveness:}
        $\front_{i} \Rightarrow \Diamond \going_{i}$.

\end{itemize}

Intuitively, \textbf{Validity} states that an agent does not
move through the intersection unless it is at the front
of
the queue in its lane. \textbf{Safety} states that if two agents go through the intersection at the same time, their moves are compatible and
do not cause a collision. (Note that the semantics of the action  $\go$ has been defined so as to ensure that an agent makes its planned
move, and not any other.) \textbf{Liveness} states that an agent
eventually gets to make its move through the intersection. (The model
implicitly
assumes that
vehicles do not have mechanical failures and block other vehicles in their lane.)

\section{Unnecessary waiting and optimality} \label{sec:unnecessary-waiting}
One desirable property of an intersection protocol is that it never makes
agents wait
unnecessarily. Eliminating unnecessary waiting is also a criterion
that has been considered
in the distributed mutual-exclusion literature
\cite{MosesP18}.
Intuitively, unnecessary waiting occurs if, given what happens in a
certain run
$r$, there is a point where if an agent had gone through the intersection
instead of waiting, safety would not be violated. In this section, we
define a notion of optimality that captures eliminating unnecessary
waiting.

We first give some definitions to define unnecessary waiting and
a domination-based notion of optimality.
For an intersection context $\gamma$ and protocol $P$,
\begin{itemize}
  \item $GO(r,m)$ is the set of agents that go through the intersection in round
        $m+1$,
        that is, the
        agents $i$ with $\I_{\gamma,P},(r,m) \models \going_i$.
  \item
        $\I_{\gamma,P},(r,m) \models \stg_i$ if $\I_{\gamma,P},(r,m) \models \pos_{i} =
          0$ and for all agents $j,k \in GO(r,m) \cup \{i\}$ where $j \ne k$,
        $(\lane_j(r,m),\plan_j(r,m))$ and $(\lane_k(r,m),\plan_k(r,m))$ are
        compatible
        moves according to $\noconflict$.

  \item For a run $r$ of a protocol $P$ in context $\gamma$, define $\gotime(r,i)$
        to be the time $m\in \Nat$ such that $\I_{\gamma,P},(r,m) \models \going_i$,
        and $\infty$ if there is no such time.
  \item $\front(r,m)$ is the set of agents that are in front of each queue,
        that is, the agents $i$ with $\front_i(r,m) = 1$.
\end{itemize}

\begin{definition}[unnecessary waiting]
  An intersection protocol
  $P$ has \emph{unnecessary waiting} with respect to an
  intersection context $\gamma$
  if there exists $i \in \Agents$ and point
  $(r,m)$ such that
  $\I_{\gamma,P},(r,m) \models \stg_i$ and $i\not\in GO(r,m)$.
\end{definition}

\begin{definition}[corresponding runs]
  Given action protocols $P,P'$ and context $\gamma$, two runs $r \in
    \I_{\gamma,P}$ and $r' \in \I_{\gamma,P'}$ \emph{correspond} if $r(0)
    = r'(0)$.
\end{definition}

Intuitively, corresponding runs have the same adversary, so agents
arrive at the intersection in the same sequence and at the same times
in the two runs.
We use this notion to define the following notion of one protocol
being better than another if it always ensures a faster flow of
traffic.

\begin{definition}[domination]
  An action protocol $P$ \emph{dominates} action protocol $P'$
  with respect
  to a context $\gamma$ if
  for all pairs of
  corresponding runs $r \in \I_{\gamma,P}$ and $r'
    \in \I_{\gamma,P'}$, all $i \in \Agents$,
  we have $\gotime(r,i) \leq \gotime(r',i)$.
  If $P$ dominates $P'$ but $P'$
  does not dominate $P$, then $P$ \emph{strictly dominates} $P'$.
\end{definition}

\begin{definition}[optimality]
  An intersection protocol $P$ is \emph{optimal} with respect to an
  intersection context $\gamma$ if there is no intersection protocol
  $P'$ that strictly dominates $P$ with respect to $\gamma$.
\end{definition}

Our goal is to connect the notions of unnecessary waiting and
optimality. The following result shows that the absence of
unnecessary waiting is sufficient for optimality.

\begin{propositionrep}
  \label{pro:unnecessary}
  If an intersection protocol $P$ has no unnecessary
  waiting with respect to an intersection context $\gamma$
  then $P$ is optimal with respect to $\gamma$.
\end{propositionrep}
\begin{proof}
  Suppose $P$ is not optimal. By definition, this means that there exists an
  intersection protocol $P'$ that strictly dominates $P$ with respect to
  $\gamma$. Then, by definition, there exist corresponding runs $r \in \I_{\gamma,P}$ and
  $r' \in \I_{\gamma,P'}$,
  an agent $i$, and a time $m$ such that $m=\gotime(r',i) < \gotime(r,i)$.
  Without loss of
  generality, suppose $m$ is minimal.
  Thus, for all $k<m$ and agents $j$, we have that if $k=
    \gotime(r',j)$ then $\gotime(r',j) \not < \gotime(r,j)$.
  However, since $P'$ dominates $P$, we have $\gotime(r',j) \leq \gotime(r,j)$,
  so we must in fact have $\gotime(r',j) = \gotime(r,j)$.
  Moreover, if $k= \gotime(r,j)$, then since $\gotime(r',j) \leq \gotime(r,j)$, we must also have $\gotime(r',j) = \gotime(r,j)$,
  by the minimality of $m$.  Thus, for all $k<m$, we have $GO(r',k) = GO(r,k)$.
  As these are corresponding runs,
  the arrival times and lanes are the same for all agents in $r$ and $r'$,
  as are the departure times up to time $m$.
  Thus, all agents are at exactly the same positions in $(r,m)$ as in $(r',m)$.
  In particular, $\front_i(r,m) = 1$.
  Since $P'$
  dominates $P$, $GO(r,m)
    \subseteq GO(r',m)$,
  so  $GO(r,m) \cup\{i\} \subseteq GO(r',m)$, by choice of $i$ and $m$.
  Since $P'$ is safe, for all agents $j,k \in GO(r,m) \cup \{i\}$
  such that $j \ne k$,
  $(\lane_j(r,m),\plan_j(r,m))$ and $(\lane_k(r,m),\plan_k(r,m))$ are compatible
  moves according to $\noconflict$.
  It follows that $\I_{\gamma,P},(r,m) \models \stg_i$.
  Because $i \not \in GO(r,m)$,
  there is unnecessary waiting in $P$.
\end{proof}

{F}rom here on, we consider contexts that require some conditions on
broadcasting.
This is because if not enough information is exchanged or
adversaries are too powerful, we cannot have a protocol that avoids
unnecessary waiting.
To see why, consider a setting where the intersection has two incoming
lanes
and one
outgoing lane, each agent has access to a global clock, and
the information-exchange protocol does not send any messages.
While a correct protocol exists that uses the global clock to determine
when an agent at the front of a queue can proceed to the intersection
(essentially, we use the global clock to simulate a traffic light, and
have the agents proceed in turns),
unnecessary waiting cannot be eliminated, simply
because the agents do not exchange enough information to rule out
safety violations.

However, even with full information exchange where each agent broadcasts
its entire local state in each round and records every broadcast it receives,
the converse of Proposition~\ref{pro:unnecessary} still does not hold.
A protocol
may have unnecessary waiting and still be optimal even with
full information exchange.

\begin{propositionrep} \label{pro:counterexample}
  There exists an intersection context $\gamma$ with
  full information exchange and no failures and an intersection protocol
  $P$ such that
  $P$ has unnecessary waiting and is optimal
  with respect to $\gamma$.
\end{propositionrep}
\begin{proof}
  Consider the intersection context $\gamma$ where the agents use
  full-information exchange, there are no failures,
  the   information-exchange protocol is such that
  agents have access to a global clock, and the time is part of their
  state, and the transmission
  environment is such that
  agents who are not in the front of some queue do not receive
  messages from agents who are leaving the intersection.
  Consider an intersection protocol $P$ with
  unnecessary waiting only in a run where a particular agent $j$ is at
  the front of its lane at time 1, and there are no other cars waiting.
  Rather than going in round 2, $j$ goes in round 4; it waits
  unnecessarily for two rounds.  Moreover, there
  is a run $r_1$ where, in addition, in round 3, vehicles $j_1$ and
  $j_2$ arrive in different lanes and go directly to the front of their
  respective lanes; moreover, their intended departure lanes
  conflict with each other, but
  not
  with that of $j$.  No vehicles other than $j$, $j_1$, and $j_2$ arrive
  before round 3.  There is also a run $r_2$
  where vehicles $j_1$ and $j_2$ again arrive at the
  intersection in round $3$, but no other vehicles have arrived up to then.
  (We describe only as much of $P$ as is needed to illustrate the point.)
  Suppose that $P$
  behaves differently if (1) $j_1$ and $j_2$ come to the intersection,
  and there is noone else present at time $3$, and
  (2) $j_1$ and $j_2$ come to the intersection,
  and they are aware of another agent $j$'s presence at time $2$,
  but their intended departure lanes do not conflict with that of $j$.
  Specifically, in the
  first case,
  with $P$, $j_2$ goes first; in the second, $j_1$ goes first.
  Thus, with $P$, $j_1$ goes in round
  $4$ of $r_1$,  and $j_2$ goes in round $4$ of $r_2$.
  This scenario is illustrated in Figure~\ref{fig:example}.
  \begin{figure}[htp]
    \centering
    \includegraphics[scale=0.3]{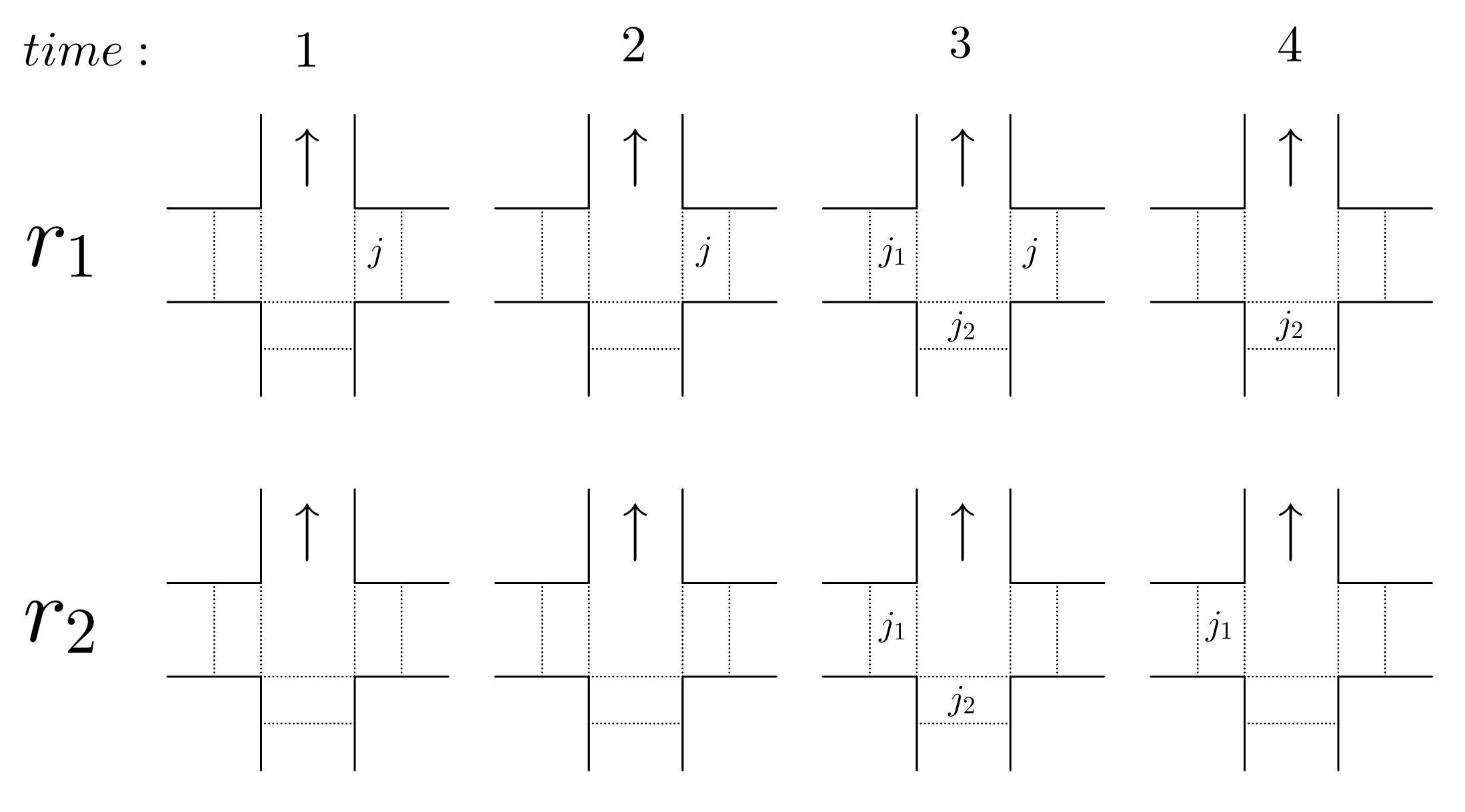}
    \caption{Unnecessary waiting and domination.
      Illustration of the scenario in the proof of
      Proposition~\ref{pro:counterexample}.}
    \label{fig:example}
  \end{figure}

  Now suppose, by way of contradiction, that protocol $P'$ dominates $P$.
  Then it must have no unnecessary waiting in the scenario above, since
  this is the only scenario where $P$ has unnecessary waiting.
  (Otherwise, the argument of Proposition~\ref{pro:unnecessary} shows
  that $P'$ does not dominate $P$.)  If $r_1'$ is the run of $P'$
  corresponding to $r_1$, $j$ must go at round 2 of $r_1'$.  By our
  assumptions on $\gamma$, when $j_1$ and $j_2$ reach the front of their
  lanes in round 2 of $r_1'$,
  they are not aware that $j$ was
  there in the previous round.
  That means that $j_1$ and $j_2$ cannot distinguish
  $(r_1',2)$ from $(r_2',2)$, where $r_2'$ is the run of $P'$ corresponding
  to $r_2$.
  Thus, they
  must do the same thing in both runs, whereas they do different things with
  $P$ in round 4.   So
  $P'$ does not dominate $P$.  It follows that no protocol dominates
  $P$, so $P$ is optimal.
\end{proof}

Proposition~\ref{pro:counterexample} suggests that the definition of
optimality doesn't exactly
capture the
lack of unnecessary waiting. We thus consider another definition
that we call \emph{lexicographic optimality}.

\begin{definition}[lexicographical domination]
  An action protocol $P$ \emph{lexicographically dominates}
  action protocol
  $P'$ with respect to a context $\gamma$ if for
  all
  corresponding runs $r \in
    \I_{\gamma,P}$ and $r' \in \I_{\gamma,P'}$,
  either $GO(r,m) = GO(r',m)$ for all times $m$ or,
  at the first time $m$ when
  $GO(r,m) \neq GO(r',m)$, we have
  $GO(r',m) \subsetneq GO(r,m)$.
  If $P$
  lexicographically dominates $P'$ but $P'$ does not lexicographically dominate
  $P$, then $P$ \emph{strictly}
  lexicographically dominates $P'$.
\end{definition}

\begin{definition}[lexicographic optimality]
  An intersection protocol $P$ is \emph{lexicographically
    optimal} with respect
  to an intersection context $\gamma$ if there is no intersection
  protocol $P'$ that strictly
  lexicographically dominates $P$ with respect
  to $\gamma$.
\end{definition}

\begin{propositionrep} \label{prop:uw-implies-lo}
  If an intersection protocol $P$ has no unnecessary waiting with respect to
  an intersection context $\gamma$,
  then $P$ is lexicographically optimal with respect to $\gamma$.
\end{propositionrep}
\begin{proof}
  We prove the contrapositive. Suppose that $P$ is an intersection protocol
  that is not lexicographically
  optimal. Then there exists an
  intersection protocol $P'$ that strictly
  lexicographically
  dominates $P$. It is immediate from the definition of lexicographic
  domination that there exist corresponding runs
  $r$ and $r'$ and a time $m$
  such that $GO(r,m) \subsetneq
    GO(r',m)$
  and $GO(r,k) = GO(r',k)$ for $k<m$. It follows from the latter condition and the fact that arrival times and lanes are the same for all
  agents in these runs that all agents are in the same position in $(r,m)$ and $(r',m)$.
  Hence, at least one agent
  has unnecessary waiting.
\end{proof}

The following result provides a partial converse to
Proposition~\ref{prop:uw-implies-lo}.

\begin{propositionrep} \label{prop:lo-implies-uw}
  If an intersection protocol $P$ is lexicographically optimal with respect to
  an information context $\gamma$ with full information exchange and no
  failures,
  then $P$ has no unnecessary waiting with respect to $\gamma$.
\end{propositionrep}
\begin{proof}
  We prove the contrapositive. Suppose $P$ is an intersection protocol that
  has unnecessary waiting with respect to
  $\gamma$ with full information exchange and no failures.
  We show that there exists a protocol $P'$ that strictly lexicographically dominates $P$.
  There exists a run $r$ of $P$
  and time $m$ where an agent $j$ waits
  unnecessarily. Let $\omega$ be the local state of agent $j$ at
  $(r,m)$.
  The sensor reading component of $\omega$
  includes     $\lane_i =   l$,
  $\front_i$
  and $\plan_i = l'$ for some lanes $l$ and $l'$.

  Consider the following protocol $P'$,
  where $P^T$ is a
  ``traffic light''
  protocol defined by the rule that agents with
  $\front_i$ in lane
  $(m \mod |\Lanesin|)$
  perform the action $\go$, and
  $\psi_i$ is defined to be true if agent $i$ hears
  either directly from $j$ or from its successor in the
  queue that $j$ was in local state $\omega$.
  (Note that if the domain of the variable $\Time_i$ in agent $i$'s
  memory state is $\Nat$, then the
  agent always knows the time $m$.
  If the domain of this variable is $\mathbb{Z}_n$ then
  we assume here that $n$ is a multiple of $|\Lanesin|$, so that the agent can determine
  $(m \mod |\Lanesin|)$.)
  \\ \begin{program}[H]
    \DontPrintSemicolon
    \lIf{$i = j \land
        \mathit{local\_state}_i = \omega$}{$\go$}
    \lElseIf{$\psi_i$}{$P^T$}
    \lElse{$P$}
    \caption{$P'_i$}
  \end{program}

  Clearly, in runs $r$ where agent $j$
  is never in local state $\omega$, $P'$ is equivalent to $P$ (since
  $\psi_i$ never holds in such runs $r$). In runs where
  $j$ is in local state $\omega$ at some time
  $m'$
  there is only one such time since only agent $j$'s actions are modified for $P'$.
  The set of agents that
  go through the intersection is identical before time $m'$, and $P'$ sends a
  strict superset through the intersection at time $m'$.
  In particular, the run corresponding to $r$ for $P'$ is such a run. This shows that $P'$ satisfies the
  conditions for strictly lexicographically dominating $P$.

  However, to conclude that $P$ is not lexicographically optimal, we also need
  to show that $P'$ is an intersection protocol. This is trivial for runs on
  which $P'$ is equivalent to $P$, since $P$ is an intersection protocol. On
  runs where $j$ is in local state $\omega$ at time $m$, we argue as follows.
  Note that for all times up to time $m$ in such runs, all agents run
  $P$, which satisfies validity and safety.
  Round $m+1$ also preserves these properties, since agents other than
  $j$ run $P$, so do not violate validity or safety.
  Agent $j$ also goes in this round, but it is at position 0 in its queue and
  its move is compatible with that of the other agents that go, so
  this agent also does not violate validity or safety.

  To prove validity and safety for later rounds, we prove by induction
  on $m'$ that for all times $m' >m$, either
  all agents $i$ at the front of some lane at time $m'$ satisfy
  $\psi_i$ and are running $P^T$, or no agent $i$ in any lane
  at time $m'$
  satisfies $\psi_i$, all agents are running $P$, and at some  time
  $m''$ with $m' \ge m'' > m$,
  there are no agents in any queue at time $m''$.  For the base case,
  $m' = m+1$.  By our assumptions, $j$ broadcasts as it goes into the
  intersection, and all agents $i$ at the front of some queue receive $j$'s
  message (since $j$ does not fail) and thus satisfy $\psi_i$.  Now
  consider an arbitrary time $m' > m+1$.  If, at time $m'-1$,
  $\psi_i$ holds for some agent $i$ at the front of some queue then, by
  the induction hypothesis, all the agents at the front of some queue
  are running $P^T$.  Whichever agents move up to the
  front of the queue in round $m'$ will hear about $j$ (from the
  agents that were at the front of the queue at time $m'+1$), and thus
  the claim holds at the end of round $m1$ (i.e., at time $m'$).  On
  the other hand, if
  there is no agent $i$ for which $\psi_i$ holds at time $m'-1$, this will
  clearly also be the case at time $m'$.  Moreover, it follows from
  the induction assumption that there is some time $m''$ with $m'-1
    \ge m'' > m$ where there were no agents $i$ in any queue. Again,
  using the induction hypothesis, all the agents in any queue at time
  $m'$ are clearly running $P$.  (Since $P$ is the default protocol,
  all agents that join some queue in round $m'$ also run $P$ at time $m'$.)

  Since both $P$ and $P^T$ satisfy safety and validity, safety and
  validity continue to hold for all times $m' > m$.  There is one minor
  subtle point that needs to be taken care of:  If all agents are
  running $P$, we have to show that the global state is one that
  could actually have arisen with protocol $P$, since otherwise the fact
  that $P$ is an intersection protocol on its own states tells us
  nothing!  This follows from the fact that if all agents are running
  $P$ at time $m'$, there must have been some earlier time $m''$ when
  there were no agents in any queue.  Thus, the global state at time $m'$
  is one that could have arisen from using $P$ starting at time 0.

  The analysis above shows that with $P'$, we either have $P^T$ run for the rest
  of the run after $m$, or agents run $P^T$ for a while and then switch to running
  $P$ for the rest of the run. In either case, since $P^T$ and $P$ both satisfy
  liveness, we also get liveness for $P'$.
\end{proof}

While considering a full-information context shows that lexicographic
optimality captures the condition on unnecessary waiting better,
it is also possible to get a similar result in a context with much less
information exchange, even without a global clock.

We say that an intersection context $\gamma = (\exchange,\F,\pi)$ is
\emph{sufficiently rich} if $\exchange$ satisfies the following conditions:
\begin{itemize}
  \item In round $m$, if agent $i$ is going to be at the front of some
        lane at
        time $m$,
        then $i$ broadcasts a message encoding $\lane_i$, $\plan_i$.
        (Note that we are here using the fact that in agent's message in
        round $m$ can incorporate the effect of its round $m$ action.
        Thus, if an agent $i$ moves to the front of the queue for some lane
        in round $m$, then $i$ will sense that it is at the front of the
        queue, and $i$ can send a message in round $m$ saying that it is
        about to be at the head of the queue for its lane.)
  \item Each agent records the $(\lane,\plan)$ pair for each agent
        in the front  of a queue, and either no agents in the
        queue other than those at the front broadcast, or agents
        at the front of a queue tag their messages to indicate
        that they are at the front of their queue.
\end{itemize}

Intuitively, if an intersection context is sufficiently rich, in the round
$m$ that an agent $i$ reaches the front of the queue for some lane, it
knows about all other agents that are in the front of their queues at
time $m$, and knows their intentions (if there are no failures).
\begin{lemmarep} \label{lem:knows-lanes}
  If $\gamma$ is a sufficiently rich intersection context with no
  failures, $P$ is an
  intersection protocol, and
  $\front_i(r,m) = 1$, then
  \begin{align*}
     & \I_{\gamma,P} \models
    \forall l \in \Lanesin(
    K_{i}(\exists j \in \Agents\; \exists l' \in \Lanes \; (\front_{j}
    \land \lane_{j} = l \land \plan_j = l') \; \lor \\
     & \qquad \qquad
    \qquad \quad \;
    K_i(\forall j \in \Agents \; (\lane_{j} \neq l))
    )
    ).
  \end{align*}
\end{lemmarep}
\begin{proof}
  Let $r$ be a run in $\I_{\gamma,P}$ where $\gamma$ is a sufficiently
  rich intersection context with no failures.
  We proceed by induction on $m$. In the base case, since
  $\gamma$ is
  an intersection context,
  we know that all queues are empty and $\I_{\gamma,P},(r,0) \models
    \pos_{i} = \bot$
  for all $i \in \Agents$.
  Therefore, $\front(r,0) = \emptyset$ and the claim holds vacuously.

  Now suppose that the claim holds for time $m-1$ of $r$.
  If $\front_i(r,m) = 1$, because $\gamma$ is sufficiently rich, $i$
  hears in round $m$ from all of
  the agents in
  $\front(r,m)$ about their current lane and intent. Moreover,
  if $i$ hears from other agents in the queue for some lane, $i$
  will hear which of the agents it hears from are in the front of their
  queues.
  Thus, $i$ knows exactly which lanes have an agent in front and
  what their intent is.
\end{proof}

Given a sufficiently rich intersection context $\gamma$, all protocols
that we
care about will depend only on what the agents hear from agents at the front
of each queue.
We say that an intersection protocol $P$ \emph{depends only on agents in the
  front of their queues} in intersection context $\gamma =
  (\exchange,\nofail,\pi)$
if,  for all $i \in \Agents$, the following condition holds: for all
pairs $s_i,s'_i$ of
possible local states of agent $i$ drawn from $L_i$ in $\exchange$, if
$\mathit{front}(r,m) = \mathit{front}(r,m')$, then
$P_i(r_i(m)) = P_i(r'_i(m'))$. Note that this condition makes sense
only in
a sufficiently rich intersection context in the no-failures setting,
since otherwise an agent may not know which agents are at the front of
their queues, so its protocol cannot depend on this fact.

\begin{propositionrep} \label{prop:lo-implies-uw-nonfip}
  Let $\gamma$ be a sufficiently rich intersection context with no failures.
  If an intersection protocol $P$ is lexicographically optimal with respect to
  $\gamma$ and $P$ depends only on agents in the front of
  their queues,
  then $P$ has no unnecessary waiting with respect to $\gamma$.
\end{propositionrep}
\begin{proof}
  We prove the contrapositive and proceed similar to the proof of
  Proposition~\ref{prop:lo-implies-uw}. Suppose intersection protocol $P$
  depends only on agents in the front of their queues and has unnecessary
  waiting with respect to sufficiently rich intersection context $\gamma$ with
  no failures. We show that there exists a protocol $P'$ that strictly
  lexicographically dominates $P$ with respect to $\gamma$. There exists a run
  $r$ of $P$ and time $m$ where an agent $j$ waits unnecessarily. Let $\omega$
  be the local state of agent $j$ at $(r,m)$.
  We then consider the following protocol $P'$: \\
  \begin{program}[H]
    \DontPrintSemicolon
    \lIf{$i = j \land \mathit{local\_state}_i = \omega$}{$\go$}
    \lElse{$P$}
    \caption{$P'_i$}
  \end{program}

  We first argue that $P'$ lexicographically dominates $P$ with respect to
  $\gamma$. In runs where agent $j$ is never in local state $\omega$, both $P$
  and $P'$ are equivalent. Hence, consider a run $r$ in $\I_{\gamma,P}$ where
  agent $j$ satisfies $r_j(m) = \omega$ for some time $m$ and let $r'$ be the
  corresponding run in $\I_{\gamma,P'}$. For $m'$ such that $m' < m$, we
  directly have $GO(r,m') = GO(r',m')$ since $P$ and $P'$ are equivalent
  before $m$. At $m$, we have $GO(r,m) \subsetneq GO(r',m)$ since $j$ goes
  with $P'$ and not with $P$.

  It remains to show that $P'$ is an intersection protocol.
  Since $P$ is an intersection protocol, validity, safety, and liveness
  hold for all runs of $P'$ where $j$ never attains the local state $\omega$.
  Now, consider $r' \in \I_{\gamma,P'}$ where $m^*$ is the first time
  when $r'_j(m^*) = \omega$. Validity and safety also hold for all
  points $(r',m'')$ such that $m'' \leq m^*$, since $P$ and $P'$ are
  equivalent
  before $m^*$. For $m'' > m^*$, observe that there must exist a
  run $r_f \in \I_{\gamma,P}$ where $\front(r_f,m_f) = \front(r',m'')$.
  This is because for any given set of agents in the front, there exists a
  run where exactly that set of agents arrive at the same time to an empty
  intersection with the same preferences. The assumption that $P$
  depends only on agents in the front then guarantees that $P'$ at $(r',m'')$
  is equivalent to $P$ at $(r_f,m_f)$. Hence, for all $m'' > m^*$,
  $P'$ satisfies validity and safety, since $P$ does so for all points.
  Lastly, we observe that $r_f$ can be constructed such that after
  $(r_f,m_f)$, the agents that arrive exactly match $(r',m'')$.
  Therefore and inductive argument shows that the liveness of $P$ implies
  the liveness of $P'$ in $r'$.
\end{proof}

\section{Intersection policies} \label{sec:intersection-policies}

Intuitively, an  \emph{intersection policy} describes which moves are permitted, as a function of
a history describing
what happened in the run until that point in time (in particular, the nondeterministic
choices that have been made by the adversary up to that moment of time),
but excluding details of the agent's local states and protocol.

We will use intersection policies as a tool to design standard
protocols that solve the intersection problem. Roughly, the methodology is
the following. Initially, we will design an intersection policy
$\sigma$ that guarantees safety and liveness for agents complying with $\sigma$.
We will then find standard intersection protocols that implement a
knowledge-based program using $\sigma$.
Finally, we will show that every intersection protocol can be
obtained in this way.

A history captures the nondeterministic choices made by the adversary up to some moment of time.
Given an adversary
$\alpha = (\tau,T,F_t,F_r)$
for a context $\gamma$ and natural number
$m\in \Nat$, define  the \emph{choices of $\alpha$
  in round $m+1$} to be the
tuple
$\alpha_m = (\tau^m,T,F_r^m,F_t^m)$, where
$\tau^m =\{(i,\ell,\ell') \in \Agents\times \Lanesin\times\Lanesout~|~ \tau(i) = (m+1,\ell,\ell')\}$,
and for $a=r$ and $a=t$,
the function $F_a^m: \Agents \rightarrow \{0,1\}$ is
defined by $F_a^m(i) = F_a(m,i)$.
(Recall that the transmission environment $T$ is fixed for the run, so
the same $T$ applies in each round.)
An  \emph{adversary history} is a finite sequence of such tuples; for an
adversary $\alpha$ and time $m$,
define $H(\alpha,m) = \langle \alpha_0, \ldots, \alpha_{m-1} \rangle$.
(If $m=0$, $H(\alpha,m)$ is the empty sequence.)
Given a context $\gamma$,
$\histories_\gamma$ is the set of all adversary histories
$H(\alpha,m)$ such that
$\alpha$ is an adversary for $\gamma$ and $m\geq 0$.
If $r$ is a run of context $\gamma$ with adversary $\alpha$, we also
write $H(r,m)$ for $H(\alpha,m)$.

\begin{definition}[intersection policy]
  An \emph{intersection policy} for a context $\gamma$ is a mapping $\sigma : \histories_\gamma \to
    \mathcal{P}(\Lanesin \times \Lanesout)$.
\end{definition}

Intuitively, an intersection policy says which moves are
\emph{permitted} in the given round.
An agent at the front of a queue for lane $\ell$ \emph{may} go if its
intent
is to make move to lane
$\ell'$ and the move $(\ell,\ell')$ is permitted.  (However,
in contexts with failures,
the agent may fail to go because it does not know that  its move is permitted.)

An infinite sequence $h_0, h_1, \ldots $ is \emph{feasible} in a context $\gamma$ if there exists an
adversary $\alpha$ of $\gamma$ such that $h_m = H(\alpha,m)$ for all $m\geq 0$.
An intersection policy $\sigma$ for a context $\gamma$ is \emph{correct}
for a context $\gamma$
if it satisfies the following
specification:
\begin{itemize}
  \item \textbf{Conflict-free:}  For all histories $h\in \histories_\gamma$, and agents $i\neq j$, if $(l_{i},l_{i}'), (l_{j},l_{j}')
          \in \sigma(h)$ then $(l_{i},l_{i}',l_{j},l_{j}') \in \noconflict$.

  \item \textbf{Fairness:} For all feasible infinite sequences of histories
        $h_0 , h_1, h_2 \ldots$, all moves
        $(\ell,\ell') \in \Lanesin\times\Lanesout$, and all $m\geq 0$,
        there exists $m'\geq m$ such that   $(\ell, \ell') \in \sigma(h_{m'})$.
\end{itemize}

Intuitively, an intersection policy $\sigma$ is conflict-free if
$\sigma$ never permits a
conflicting set of moves to occur simultaneously.
An intersection policy $\sigma$ is fair if, in every feasible
infinite sequence of histories, $\sigma$
permits every possible move infinitely often.
A context $\gamma$ is \emph{$\sigma$-aware} for an intersection
policy $\sigma$ if,
for all protocols $P$ for $\gamma$, agents $i$, lanes $\ell \in \Lanesin$ and $\ell'\in \Lanesout$, we have
$\I_{\gamma,P} \models ((\ell,\ell')\in \sigma \land \lane_i = \ell) \rimp
  K_i((\ell,\ell')\in \sigma)$.

\begin{example}
  A simple correct intersection policy is a cyclic traffic
  light. Suppose that the set of all moves $\Lanesin\times \Lanesout$
  is partitioned into a collection $S_0,\ldots ,S_{K-1}$, such that each set $S_k$ is a compatible set of moves.
  Then the intersection policy defined on histories $h$ by $\sigma(h) =
    S_{|h| \mod K}$ is easily seen to be correct
  (whatever the context $\gamma$).
  Clearly, every synchronous context is $\sigma$-aware for this policy.
\end{example}
\begin{example}
  A more complicated intersection policy is one that prioritizes
  certain lanes if they contain specific agents (e.g., an
  ambulance).
  Suppose that $\mathcal{A} \subseteq \Agents$ is a finite set of
  higher-priority agents.  Consider the intersection policy that allows
  moves given by a cyclic
  traffic-light policy unless there is an agent in $\mathcal{A}$
  that has arrived and is yet to make a move. In that case,
  the policy runs the traffic-light policy restricted to lanes containing
  higher-priority agents.    This requires
  considering past moves permitted by the policy and
  the adversary history to determine the state of the queues.
  In a context with no failures, synchrony, and
  a transmission environment such that the
  presence of a
  higher-priority agent is known by agents in the front, we get
  $\sigma$-awareness.
\end{example}

Given an intersection policy $\sigma$,
consider the following knowledge-based program $\kbp^{\sigma}$: \\
\begin{program}[H]
  \DontPrintSemicolon
  \lIf{$K_{i}(\front_i \land (\lane_{i},\plan_{i}) \in \sigma)$}{$\go$}
  \lElse{$\noop$}
  \caption{$\kbp^{\sigma}_i$}
\end{program}

Here the formula $(\lane_{i},\plan_{i}) \in \sigma$ is satisfied at a point $(r,m)$
if we have $(\lane_{i}(r,m),\allowbreak\plan_{i}(r,m))
  \in \sigma(H(r,m))$.

An action protocol $P$ \emph{implements} a knowledge-based program
of the form ``if $K_i\phi$ then $\go$ else $\noop$'' in a context
$\gamma$ if, for all points $(r,m)$ of $\I_{\gamma,P}$, we have
$P_i(r_i(m)) = \go$ iff $\I_{\gamma,P}(r,m) \models K_i \phi$. (See
\cite{FHMV} for the definition for more general program structures.)

We immediately get the
following.

\begin{propositionrep} \label{prop:global-safe}
  For every synchronous context $\gamma$ and intersection policy $\sigma$ for $\gamma$, there
  exists a behaviorally unique\footnote{``Behavioral uniqueness'' here means that any two implementations take the same actions at all reachable states, and can differ only on unreachable states.}
  $P$ implementing  the knowledge-based program $\kbp^\sigma$ with
  respect to $\gamma$.
  If $\sigma$ is a correct intersection policy with respect to
  $\gamma$, then
  every implementation $P$  of the knowledge-based program
  $\kbp^{\sigma}$ with respect to $\gamma$
  satisfies safety and validity.
\end{propositionrep}
\begin{proof}
  The formula $\phi = \front_i \land (\lane_{i},\plan_{i}) \in \sigma$ in the knowledge condition of $\kbp^\sigma$
  depends only upon the past and the current state of the system.
  Existence and uniqueness of implementations in synchronous contexts $\gamma$ therefore follows from
  Theorem~7.2.4 of \cite{FHMV}. This theorem
  implies that in synchronous contexts, a knowledge-based program, containing only knowledge conditions with
  this property, has an implementation, and all its implementations are behaviourally equivalent
  (This means that if $P$ and $P'$ are implementations, then $P_i(s) = P'_i(s)$ for all local states $s$
  that are reachable in $\I_{\gamma,P}$, and in fact, $\I_{\gamma,P} = \I_{\gamma,P'}$.)

  Suppose that $\sigma$ is correct with respect to an intersection context $\gamma$ and $P$ implements
  $\kbp^{\sigma}$  with respect to $\gamma$.
  If, for agents $i\neq i'$,
  we have $\I_{\gamma,P},(r,m) \models \going_{i} \land \going_{i'}$, then,
  since $\gamma$ is an intersection context   it follows that  $P_{j}(r_{j}(m))=$ $\go$   for $j \in \{i,i'\}$.
  Since $P$ implements $\kbp^{\sigma}$,  we must have
  $\I_{\gamma,P},(r,m) \models K_{j}(
    \front_i
    \land  (\lane_{j}, \plan_{j}) \in \sigma)$.
  Since
  $K_j\phi \rimp \phi$
  is valid, it follows that $\I_{\gamma,P},(r,m) \models
    \front_j \land
    (\lane_{j}, \plan_{j}) \in \sigma$.
  Since $\sigma$ is conflict-free, we have that
  $$((\lane_i(r,m), \plan_i(r,m)), (\lane_{i'}(r,m), \plan_{i'}(r,m))
    \in \noconflict.$$
  Thus, $P$ is safe.

  Now suppose that for agent $i$, we have  $\I_{\gamma,P},(r,m) \models \going_i$.
  Then we have $P_{i}(r_{i}(m))  = \go$, from which it follows, as in the
  argument above, that
  $\I_{\gamma,P},(r,m) \models
    \front_j$.
  Thus, $P$ satisfies validity.
\end{proof}

Proposition~\ref{prop:global-safe} provides a way of deriving an
intersection protocol from an intersection policy.
We can also show that every intersection protocol can be
derived from some intersection policy in this way.

\begin{propositionrep} \label{prop:global-exists}
  If $P$ is
  a protocol satisfying validity and safety
  then there exists
  a conflict-free intersection policy $\sigma$
  for $\gamma$
  such that $P$ implements $\kbp^{\sigma}$ with respect to $\gamma$.
\end{propositionrep}
\begin{proof}
  Suppose that $P$ is an intersection protocol  for an intersection context
  $\gamma$.  We show how to construct the intersection policy $\sigma$.

  Note first that for two adversaries $\alpha$ and $\alpha'$ of $\gamma$ with
  $H(\alpha,m) = H(\alpha',m)$,
  if $r$ and $r'$ are runs of $\I_{P,\gamma}$ with adversaries $\alpha$
  and $\alpha'$, respectively,
  then  $r(m) = r'(m)$. The proof is a straightforward induction,
  using the fact that $P$ is deterministic and that
  the state transition semantics is a deterministic function of the actions selected by $P$ and
  the failures and arrival times selected by the adversary.
  We can therefore define the function $\sigma$ with domain
  $\histories_\gamma$ by
  taking
  $\sigma(h) = \sigma'(r,|h|)$ where $r$ is a run of $\I_{P,\gamma}$
  with $H(r,|h|) = h$ and
  $$
    \sigma'(r,m) = \{ (\ell,\ell')~|~ \exists i\in \Agents
    \left(\begin{array}[h]{l}
        \front_i(r,m) = 1
        \text,\,  P_i(r_i(m)) = \go, \, \\
        \lane_i(r,m) = \ell, \, \plan_i(r,m) = \ell'%
      \end{array}\right)
    \}.$$
  To see that $\sigma$ is well-defined, note that by determinism of $P$, and the
  fact that the history $h=H(r,m)$ captures all the non-determinism to time $m=|h|$,
  for all points $(r,m)$ and $(r',m)$ of $P$ with $H(r,m) = H(r',m)$,
  we have $r[0\ldots m] = r'[0\ldots m]$, and consequently $\sigma'(r,m) = \sigma'(r',m)$.
  It is obvious from the fact that $P$ is safe that $\sigma$ is conflict-free.

  To show that $P$ implements $\kbp^\sigma$, we need to show that for all points $(r,m)$ of
  $\I_{\gamma,P}$ and agents $i$, we have that $P_i(r_i(m)) =
    (\kbp^{\sigma}_{i})^{\I_{\gamma,P}}(r_i(m))$.
  We consider two cases, depending on whether
  $(\kbp^\sigma_i)^{\I_{\gamma,P}}(r_i(m)) = \noop$.

  If $(\kbp^\sigma_i)^{\I_{\gamma,P}}(r_i(m)) = a \neq
    \noop$, then $a = \go$ and  $\I_{\gamma,P},(r,m) \models
    K_i( \front_i \land (\lane_i,\allowbreak\plan_i)\in \sigma )$.
  Hence,  $\I_{\gamma,P},(r,m) \models \front_i \land  (\lane_i,\plan_i)\in \sigma$,
  so, by the semantics of this formula,
  we have
  $\front_i(r,m) = 1$ and
  $(\lane_i(r,m),\plan_i(r,m))) \in
    \sigma'(r,m)$.
  By the definition of $\sigma$ and the fact that $r$ is a run with
  adversary $\alpha$,
  it follows that there exists an agent $j$ such that
  $\front_j(r,m) = 1$,
  $P_j(r_j(m)) = \go$,
  $\lane_j(r,m) = \lane_i(r,m)$, and $\plan_j(r,m) =  \plan_i(r,m)$.
  Since there can only be at most one agent
  at the front of
  a given lane,
  we must have that $j=i$. Hence
  $P_i(r_i(m)) = \go = \kbp_{\sched,i}^{\I_{\gamma,P}}(r_i(m))$.

  If $\kbp_{\sched,i}^{\I_{\gamma,P}}(r_i(m)) = \noop$, then
  $\I_{\gamma,P},(r,m) \models \neg K_i( \front_i \land \allowbreak (\lane_i,\plan_i)\in \sigma ) $.
  The proposition $\front_i$ is local to agent $i$, since $\front_i$ is included in the sensor
  reading part of the agent's local state. Thus $\front_i \dimp K_i(\front_i)$ is valid.
  If
  $\front_i(r,m) = 0$,
  we must have that $P_i(r_i(m)) = \noop$, since $P$ is valid.
  It therefore suffices to show that if
  $\I_{\gamma,P},(r,m) \models \front_i  \land \neg K_i( (\lane_i,\plan_i)\in \sigma ) $
  then $P_i(r_i(m)) = \noop$.
  We do so by showing the contrapositive, specifically, that
  if  $\front_i(r,m) =1$ and $P_i(r_i(m)) \neq \noop$ then
  $\I_{\gamma,P},(r,m)\models K_i( (\lane_i,\plan_i)\in \sigma ) $.
  Note that if  $P_i(r_i(m)) \neq \noop$, we have
  $P_i(r_i(m)) = \go$.
  Suppose that $(r,m) \sim_i (r',m')$. Then  we must have
  $P_i(r'_i(m')) =  \go$.
  By validity, $\front_i(r',m') = 1$.
  By definition of $\sched$
  we have  $(\plan_i(r',m'),\plan_i(r',m'))) \in
    \sched'(r',m')$.
  Hence  $\I_{\gamma,P}, (r',m') \models \front_i \land (\lane_i,plan_i) \in \sched$.
  This shows that  $\I_{\gamma,P},(r,m)\models K_i(\front_i \land (\lane_i,\plan_i)\in \sigma ) $,
  as required.
\end{proof}

\begin{definition}[efficient intersection policies]
  An intersection policy $\sigma$ for a context $\gamma$ is \emph{efficient} if
  for all points $h \in \histories_\gamma$,
  we have that $\sigma(h)$ is a maximal conflict-free set of moves.
\end{definition}

\section{A Knowledge-Based Program with Lexicographically Optimal Implementations} \label{sec:optimal}

We would like to have a way to derive lexicographically optimal protocols under a range of failure assumptions.
Moreover, we want these protocols to be fair to all agents, even if there are agents present that are not.
To satisfy these goals, we start with an intersection policy $\sigma$
that can be run by all vehicles, including those
without V2V communications equipment.
One example of such $\sigma$ is the traffic light policy $\sigma^{TL}$.
In all cases, moves permitted by
this policy will have priority, but
we allow vehicles to violate the policy provided that they know that
they can do so safely. To avoid
clashes,
we
establish a priority order on the violations.
Let $\lead$ be a function from histories such that $\lead(h) \in \Lanesin$ for
each history $h$.
Intuitively, the agent at the
front of
the queue for lane $\lead(h)$ will get precedence in going through the
intersection
at the point $(r,m)$.
The context $\gamma$ is \emph{$\lead$-aware} if, for all protocols
$P$ for $\gamma$ and agents $i$ and $\ell \in \Lanesin$, we have
that $\I_{\gamma,P} \models \lead = \ell \rimp K_i(\lead = \ell)$.

\newcommand{\kbpnext}{\kbp}
Consider the following knowledge-based program $\kbpnext$,
where  $V_i$ is the proposition
\begin{quote}
  $(\lane_i,\plan_i) \not \in \sigma$ and the move $(\lane_i,\plan_i)$ is compatible with
  (a) all moves $(\lane_j,\plan_j)\in \sigma$ where $j$ is an agent
  who is about to enter the intersection (i.e., $\going _j$ holds)
  (b) all moves $(\lane_j,\plan_j)\not \in \sigma$ where $j \neq i$ is
  an agent for which $\going_j$ holds and $\lane_j \in [\lead,\lane_i)$.
  (Here
  $[\lead,\lane_i)$ is the set of lanes from $\lead(r,m)$ to
  $\lane_i \pmod{|\Lanesin|}$.)  \end{quote}
\begin{program}[H]
  \DontPrintSemicolon
  \lIf{$K_{i}(\front_i \land ((\lane_{i},\plan_{i}) \in \sigma \lor V_i))$}{$\go$}
  \lElse{$\noop$}
  \caption{$\kbpnext_i$}
\end{program}

Intuitively, this knowledge-based program allows all agents
permitted by $\sigma$ to go to do so,
as well as allowing agents not permitted by $\sigma$ to go, provided
they do so in a cyclic priority order, and each agent that goes knows
that its move is compatible with the moves of all agents of higher priority
(including agents permitted to go by $\sigma$).

\begin{propositionrep} \label{prop:optimal}
  Let $\sigma$ be a conflict-free intersection policy.
  If context $\gamma$ is synchronous, $\lead$-aware, and
  $\sigma$-aware,  then
  there exists a unique implementation
  $P$
  of $\kbpnext$
  that
  satisfies safety and validity,
  is lexicographically optimal
  with respect to $\gamma$,
  and lexicographically dominates the unique implementation of $\kbp^\sigma$.
  Moreover, if
  $\sigma$ is fair
  then $P$ satisfies liveness.
\end{propositionrep}
\begin{proof}
  We first show that there exists a protocol $P$ that implements $\kbpnext$ in $\gamma$.
  Since $\gamma$ is synchronous, we can define $P$ by induction on the
  time $\Time$ implicitly encoded
  in agents' local states. For all agents $i$ and all states $s\in L_i$
  with $\Time_i(s) = 0$, we define $P_i(s) = \noop$.
  Having defined $P_i$ on states up to time $m-1$, we have uniquely determined
  the set of all global states that occur at time $m$ on runs of $P$ in context $\gamma$.
  For agent $i$, we define $P_i$ on local states $s\in L_i$ with $\Time_i(s) = m$ as follows.

  For an agent $i$ such that $\front_i(s) = 0$, define $P_i(s) =
    \noop$.
  Note that for agents $i$ with
  $\front_i(s) = 1$,
  if $r_i(m) = s$, then
  $\I_{\gamma,P}, (r,m)) \models K_i(\front_i)$,
  since, by the minimal sensor model, $\front_i \dimp K_i \front_i$ is
  valid in $\I_{\gamma,P}$.
  For agents $i$ such that
  $\I_{\gamma,P}, r_i(m) \models K_i(\front_i \land (\lane_{i},\plan_{i}) \in \sigma)$,
  define  $P_i(s) = \go$.  Note that this is well defined, because  the
  formula $\front_i \land (\lane_{i},\plan_{i}) \in \sigma$ refers only
  to the past and present,
  and all runs of the protocol $P$ have been determined up to time $m$.
  Finally, we consider the remaining agents in the
  the cyclic order
  $\lead, \lead+1 \ldots , \lead -1 (\mod |\Lanesin|)$,
  and take $P_i(s) =
    \go$ if  $\I_{\gamma,P}, r_i(m) \models K_i(V_i)$ and $P_i(s) =
    \noop$ otherwise.
  This also is well defined, even
  though it refers to what the protocol $P$
  does at time $m$, because by synchrony and the fact that the behaviour
  of $P_i$ at all points of
  $\I_{\gamma,P}$ at time $m$ has been determined for all agents
  referenced by $V_i$.
  We use here the fact that $\gamma$ is $\sigma$-aware and
  $\lead$-aware, so each agent
  at the front of a queue
  can determine
  $\lead$ and $\sigma$ at each point.

  To show that protocol $P$ is an implementation of $\kbpnext$, we
  again proceed by induction on time.
  For $m=0$, we have $\front_i(r,0) = 0$ for all agents $i$ and runs $r$ of $P$, so
  $\I_{\gamma,P},(r,0) \models \neg K_{i}(\front_i \land ((\lane_{i},\plan_{i}) \in \sigma \lor V_i))$.
  Thus, $\kbpnext^{\I_{\gamma,P}}(r_i(0)) = \noop = P_i(r_i(0))$, as required.

  Suppose that for all runs $r$ of $P$ in context $\gamma$ and all $k< m$,
  we have  that
  $\kbpnext^{\I_{\gamma,P}}(r_i(k)) = P_i(r_i(k))$. Let $r$ be a run of $P$ in context $\gamma$.
  We show that  $\kbpnext^{\I_{\gamma,P}}(r_i(m)) = P_i(r_i(m))$.
  \begin{itemize}
    \item If $\front_i(r_i(m)) =0$,  then $P_i(r_i(m)) = \noop$ by
          definition.
          Necessarily, $\I_{\gamma,P}, r_i(m) \models \neg K_i(\front_i)$, so also
          $\I_{\gamma,P}, r_i(m) \models \neg K_i(\front_i \land ((\lane_{i},\plan_{i}) \in \sigma)\lor V_i))$,
          and we have that $\kbpnext^{\I_{\gamma,P}}(r_i(m)) = \noop = P_i(r_i(m)) $.
    \item If
          $\front_i(r_i(m)) =1$ and
          $\I_{\gamma,P}, r_i(m) \models K_i(\front_i \land
            (\lane_{i},\plan_{i}) \in \sigma)$, then
          $P_i(r_i(m)) = \go$. Clearly,
          $\I_{\gamma,P}, r_i(m) \models K_i(\front_i \land
            ((\lane_{i},\plan_{i}) \in \sigma)\lor V_i)$,
          so that  $\kbpnext^{\I_{\gamma,P}}(r_i(m)) = \go $.

    \item Finally, if $\front_i(r_i(m)) =1$ and
          $\I_{\gamma,P}, r_i(m) \models \neg K_i(\front_i \land
            (\lane_{i},\plan_{i}) \in \sigma)$,
          then since $\front_i$ is known by all agents and this context
          satisfies $\sigma$-awareness, we have that
          $\I_{\gamma,P}, r_i(m) \models K_i(\front_i \land (\lane_{i},\plan_{i}) \not \in \sigma)$.
          We show that $\kbpnext^{\I_{\gamma,P}}(r_i(m)) =
            P_i(r_i(m)) $
          by induction on the cyclic order.
          Assuming that we have this property for all such agents earlier in
          this order,
          then, by definition, $P_i(r_i(m)) = \go$ iff $\I_{\gamma,P}, r_i(m)
            \models K_i(V_i)$.
          If $\I_{\gamma,P}, r_i(m) \models K_i(V_i)$, then
          we also have that
          $\I_{\gamma,P}, r_i(m) \models K_i(\front_i \land ((\lane_{i},\plan_{i}) \in \sigma \lor V_i))$,
          hence $\kbpnext^{\I_{\gamma,P}}(r_i(m)) = \go = P_i(r_i(m))$, as required.
          On the other hand, if $\I_{\gamma,P}, r_i(m) \models \neg K_i(V_i)$, then
          $P_i(r_i(m)) = \noop$.  Moreover,
          there exists a run $r'$ of $P$ in context $\gamma$ such that $r'_i(m) =
            r_i(m)$ and
          $\I_{\gamma,P}, r'_i(m) \models \neg V_i$. Because
          $\I_{\gamma,P}, r_i(m) \models K_i(\front_i \land (\lane_{i},\plan_{i}) \not \in \sigma)$,
          it follows that
          $\I_{\gamma,P}, r'_i(m)
            \models (\lane_{i},\plan_{i}) \not \in \sigma) \land \neg V_i$,
          which implies that
          $\I_{\gamma,P}, r_i(m) \models \neg K_i(\front_i \land ( (\lane_{i},\plan_{i})\in \sigma \lor V_i)$.
          Hence $\kbpnext^{\I_{\gamma,P}}(r_i(m)) = \noop = P_i(r_i(m)) $.
  \end{itemize}
  This completes the proof that $P$ implements $\kbpnext$.

  To show that $P$ is the unique implementation of $\kbpnext$ in context
  $\gamma$, suppose that $P'$ is another implementation of $\kbpnext$.
  We show that, for all corresponding runs $r$ of $P$
  and $r'$ of $P'$ in $\gamma$ and agents $i$, we have that
  $P_i(r_i(m)) = P'_i(r'_i(m))$
  and that
  $\I_{\gamma,P}, r_i(m) \models K_i(\front_i \land ( (\lane_{i},\plan_{i})\in \sigma \lor V_i)$ iff
  $\I_{\gamma,P'}, r_i(m) \models K_i(\front_i \land ( (\lane_{i},\plan_{i})\in \sigma \lor V_i)$
  by induction on $m$.
  For $m =0$, this follows from the fact that there can be no agents at the front of any queue,
  so $P_i(r_i(0)) = \noop = \kbpnext^{\I_{\gamma,P'}}(r_i(0)) = P'_i(r'_i(m))$.
  For the induction step, suppose that  $P_i(r_i(k)) = P'_i(r'_i(k))$ for all
  $k<m$ and all corresponding runs $r$ of $P$
  and $r'$ of $P'$ in $\gamma$. This means that
  $r[0\ldots m] = r'[0\ldots m]$
  for all corresponding runs $r$ and $r'$, and, in particular that $r_i(m) =
    r'_i(m)$.
  The fact that $P_i(r_i(m)) = P'_i(r'_i(m))$ now follows from the fact that
  $P_i(r_i(m))$ and $P'_i(r'_i(m))$ are identically determined from the satisfaction of
  the formula $K_i(\front_i \land ( (\lane_{i},\plan_{i})\in \sigma \lor V_i)$, by an
  induction over, first, the agents that are not at the front of a queue,
  then, the lanes $\ell$ such that there exists an agent $i$ with
  $\front_i(r,m) = \front_i(r',m) = 1$  and $(\lane_i, \plan_i) \in \sigma(H(r,m)) = \sigma(H(r',m))$,
  and next, the lanes
  $\lead, \ldots , \lead-1 (\mod |\Lanesin|) $ in
  the cyclic order that have an
  agent at the front at
  $(r,m)$ (equivalently, at $(r',m)$).

  Next, we show that $P$ is an intersection protocol.
  \begin{itemize}

    \item For validity, suppose that $\I_{\gamma,P},(r,m) \models \going_i$.
          Then $P_i(r_i(m)) = \go$ and
          $\I_{\gamma,P}, (r,m) \models K_i(\front_i \land ( (\lane_{i},\plan_{i})\in \sigma \lor V_i)$,
          from which it follows that $\I_{\gamma,P}, (r,m) \models \front_i$, as required.

    \item  For safety, suppose that
          $\I_{\gamma,P}, (r,m) \models \going_i \land \going_j$. (By symmetry of $\noconflict$, it suffices to consider
          the case $i\neq j$.)
          This means  $\I_{\gamma,P}, (r,m) \models K_i(\front_i \land ( (\lane_{i},\plan_{i})\in \sigma \lor V_i)$
          and $\I_{\gamma,P}, (r,m) \models K_j(\front_j \land ( (\lane_{j},\plan_{j})\in \sigma \lor V_j)$.
          We consider the following cases.
          \begin{itemize}
            \item Both $(\lane_i(r,m),\plan_i(r,m))$ and $(\lane_j(r,m),\plan_j(r,m))$
                  are in $\sigma(r,m)$.
                  In this case, these moves are compatible, since $\sigma$ is
                  conflict-free.
            \item One of  $(\lane_i(r,m),\plan_i(r,m))$ and $(\lane_j(r,m),\plan_j(r,m))$
                  is in $\sigma(r,m)$ and the other is not.  Without loss of
                  generality, suppose that $(\lane_i(r,m),\plan_i(r,m))\in \sigma(r,m)$,
                  Then we have  that $\I_{\gamma,P}, (r,m) \models  V_j$,
                  which implies that $(\lane_j(r,m),\allowbreak\plan_j(r,m))$ is compatible with
                  $(\lane_i(r,m),\plan_i(r,m))$ by case (a) of the definition of $V_j$.
            \item Neither $(\lane_i(r,m),\plan_i(r,m))$ nor $(\lane_j(r,m),\plan_j(r,m))$
                  is in $\sigma(r,m)$.
                  Without loss of generality, suppose that $\lane_i(r,m)$ precedes
                  $\lane_j(r,m)$ in the
                  cyclic sequence.
                  By $\sigma$-awareness and $\lead$-awareness, we have
                  that
                  $\I_{\gamma,P}, (r,m) \models K_j(\neg  (\lane_{i}, \allowbreak \plan_{i})\in \sigma)$,
                  which implies that
                  $\I_{\gamma,P}, (r,m) \models K_j(V_j)$, from which it follows that
                  $\I_{\gamma,P}, (r,m) \models V_j$. By part (b) of the definition of $V_j$,
                  it follows that $(\lane_j(r,m),\allowbreak\plan_j(r,m))$ is compatible with $(\lane_i(r,m),\plan_i(r,m))$.
          \end{itemize}
          Thus $(\lane_j(r,m),\plan_j(r,m))$ is compatible with $(\lane_i(r,m),\plan_i(r,m))$ in each case,
          showing that $P$ satisfies safety.

    \item For liveness, we prove that $
            \I_{\gamma,P},(r,m) \models  \front_i \rimp
            \Diamond \going_i$.
          Suppose, by way of contradiction, that $\I_{\gamma,P},(r,m) \models
            \front_i \land \Box \neg \going_i$.
          Since $\sigma$ is fair, there exists $k\geq m$ such that
          $(\lane_i(r,k), \plan_i(r,k)) \in \sigma(r,k)$.
          Because agent $i$ does not go after time $m$, we have that $\front_i(r,m)
            = 1$ and that $(\lane_i(r,k), \plan_i(r,k)) = (\lane_i(r,m),\plan_i(r,m))$.
          By  $\sigma$-awareness of $\gamma$, and the fact that the agent knows the values of $\front_i$, $\lane_i$ and $\plan_i$, since these are encoded in the local state,
          it follows
          that  $\I_{\gamma,P},(r,k) \models K_i(\front_i \land (\lane_i,\plan_i)\in \sigma)$.
          We obtain from this that $P_i(r_i(k)) = \go$, contradicting  $\I_{\gamma,P},(r,k) \models \neg \going_i$.
  \end{itemize}

  To show that $P$ is lexicographically optimal, suppose by way of
  contradiction that there
  exists an intersection protocol $P'$
  that strictly lexicographically dominates $P$.
  Let $m$ be the earliest time for which there exists a run $r$ of $P$
  such that for the corresponding run $r'$ of $P'$, we have $\GO(r',m) \supsetneq \GO(r,m)$, and
  let $i$ be an agent in $\GO(r',m) \setminus \GO(r,m)$.
  Since $\GO(r',k) = GO(r,k)$ for $k<m$, we have that $r[0\ldots m] =
    r'[0 \ldots m]$.
  (In fact, this holds for all pairs of corresponding runs of $P$ and
  $P'$ and time $m$.)
  Because $i \in \GO(r',m)$, we have that $1 = \front_i(r',m) =
    \front_i(r,m)$.
  Since $i \not \in \GO(r,m)$, we have
  that
  $\I_{\gamma,P}, (r,m) \models K_i(\front_i \land ( (\lane_{i},\plan_{i})\in \sigma \lor V_i)$.
  It follows that there exists a run $r^1$ of $P$ with $r_i(m) = r^1_i(m)$ such that
  $\I_{\gamma,P}, (r^1,m) \models \neg( \front_i \land ( (\lane_{i},\plan_{i})\in \sigma \lor V_i))$.
  By the fact that $\front_i$ is encoded into the agent $i$'s local state,
  $\front_i(r^1,m) = \front_i(r,m)$, so
  $\I_{\gamma,P}, (r^1,m) \models ( (\lane_{i},\plan_{i})\not \in \sigma \land \neg V_i))$.
  This implies that there exists an agent $j$ that precedes $i$ in the
  cyclic sequence such that
  $P_j(r_j(m)) = \go$ and that $(\lane_j(r^1,m),\plan_j(r^1,m))$ is
  incompatible with
  $(\lane_i(r^1,m),\plan_i(r^1,m))$. Because both $\lane_i$ and $\plan_i$ are components of the sensor reading
  in agent $i$'s local state, we have that $(\lane_i(r^1,m),\plan_i(r^1,m))
    \allowbreak =  (\lane_i(r,m),\plan_i(r,m))$.
  Thus, $(\lane_j(r^1,m),\allowbreak\plan_j(r^1,m))$ is incompatible with
  $(\lane_i(r,m),\allowbreak\plan_i(r,m))$. We also have $j \in \GO(r^1,m)$.

  Let $r^{1'}$ be the run of $P'$  corresponding to $r^1$. By the
  observation above, we have  that
  $r^1[0\ldots m] = r^{1'}[0\ldots m]$ and that $\GO(r^1,m) \subseteq
    \GO(r^{1'},m)$.
  In particular, we obtain that $j \in \GO(r^{1'},m)$.
  Since $r^1[0\ldots m] = r^{1'}[0\ldots m]$, it follows that
  $(\lane_j(r^1,m), \allowbreak \plan_j(r^1,m)) \allowbreak = \allowbreak (\lane_j(r^{1'},m),\plan_j(r^{1'},\allowbreak m))$.
  Moreover, since $r[0\ldots m] \allowbreak = r'[0 \ldots m]$,
  we have that $(\lane_i(r,m),\allowbreak \plan_i(r,m)) \allowbreak = (\lane_i(r',m), \allowbreak \plan_i(r',m))$.
  Also, $r^{1'}_i(m) =  r^{1}_i(m) =  r_i(m) =  r'_i(m)$.
  Because $\lane_i$ and $\plan_i$ are components of agent $i$'s local state,
  these variables have the same values at these four points.
  Since $i \in \GO(r',m)$, we have that $i \in \GO(r^{1'}, m)$.
  But now, we have at point $(r^{1'}, m)$, that agent $i$ makes move
  $(\lane_i(r,m),\plan_i(r,m))$ which we know to be incompatible with the move
  \[(\lane_j(r^1,m),\plan_j(r^1,m)) = (\lane_j(r^{1'},m),\plan_j(r^{1'},m))\]
  made by agent $j$ at this point.  This contradicts the Safety of protocol $P'$,
  so $P'$ is not an intersection protocol as assumed.
  This completes the proof that $P$ is lexicographically optimal.

  Finally, to show that $P$ dominates the implementation of $\kbp^\sigma$ in $\gamma$,
  suppose that $P'$ is such an implementation. Consider an adversary with respect to which the
  behaviour of $P$ and $P'$ differ, let $r$ and $r'$ respectively be the runs of $P$ and $P'$
  with this adversary, and let $m$ be the minimal time for which $P$ and $P'$ differ in round $m+1$.
  We have $r[0\ldots m] = r'[0\ldots m]$,  so also $H(r,m) = H(r',m)$.

  For every agent $j\in \GO(r',m)$,  since $P'$ implements
  $\kbp^\sigma$, we have that $\I_{\gamma,P'},(r',m) \models K_j(\front_j \land
    (\lane_j,\plan_j) \in \sigma)$.
  Thus, we have that $\front_j(r',m) = 1$ and $(\lane_j(r',m),\allowbreak\plan_j(r'm)) \in
    \sigma(r',m) = \sigma(r',m)$. Because $\front_j$, $\lane_j$ and $\plan_j$ are
  encoded in agent $j$'s local state and $r(m) = r'(m)$, it follows that
  $\front_j(r,m) = 1$ and $(\lane_j(r',m),\allowbreak\plan_j(r'm)) \in \sigma(r',m) =
    \sigma(r',m)$. Using $\sigma$-awareness of $\gamma$, we obtain that
  $\I_{\gamma,P},(r,m) \allowbreak \models \allowbreak
    K_j(\front_j \land \lane_j,\plan_j) \in \sigma)$,
  But this implies that $\I_{\gamma,P},(r,m) \models K_j(\front_j \allowbreak
    \land \allowbreak
    ((\lane_j,\plan_j) \allowbreak \in \sigma) \lor V_j)$, so also $j \in \GO(r,m)$. It
  follows that  $\GO(r',m) \subseteq \GO(r,m)$.
  Since these sets are different, by assumption, we must have that $\GO(r',m)
    \subsetneq \GO(r,m)$. This shows that $P$ lexicographically dominates $P'$.
\end{proof}

We can also obtain liveness of the  implementations of
$\kbp$
under some other conditions.
Define the function $\lead$ to be \emph{fair} if, for all feasible
sequences of histories
$h_0, h_1, \ldots$, all $m\geq 0$ and all lanes $\ell \in \Lanesin$,
there exists $m'\geq m$ such that $\lead(h_{m'}) = \ell$.  Intuitively, fairness of $\lead$ will
ensure that $\lead$ fairly selects the first agent that
can violate the intersection  policy according to
$\kbp$
when this can
be done safely.

We also need to ensure that it is not the case that $\sigma$
always gives priority to
other lanes whenever the lane $\ell$ is selected by $\lead$. For this,
define a pair $(\sigma,\lead)$, consisting of an intersection policy
$\sigma$ and a function $\lead$,
to be \emph{fair} if for all feasible sequences of histories
$h_0, h_1, \ldots$, all $m\geq 0$ and all moves $(\ell,\ell') \in \Lanesin$,
there exists $m'\geq m$ such that either $(\ell,\ell')\in \sigma(h_{m'})$,
or $\lead(h_{m'}) = \ell$ and
$(\ell,\ell')$ is compatible with all the moves in $\sigma(h_{m'})$

\begin{propositionrep} \label{propn:sigmaleadlive}
  Let $P$ be an implementation of
  $\kbp$
  with respect to a synchronous,
  $\lead$-aware and $\sigma$-aware context. If the pair $(\sigma,\lead)$ is fair,
  then $P$ satisfies liveness.
\end{propositionrep}
\begin{proof}
  The argument is similar to that of the proof of liveness based on
  fairness of $\sigma$ in
  Proposition~\ref{prop:optimal}.
  We prove that $
    \I_{\gamma,P},(r,m) \models  \front_i \rimp
    \Diamond \going_i$.
  Suppose, by way of contradiction, that $\I_{\gamma,P},(r,m) \models
    \front_i \land \Box \neg \going_i$.
  Since $(\sigma,\lead)$ is fair, there exists $k\geq m$ such that either
  $(\lane_i(r,k), \plan_i(r,k)) \in \sigma(H(r,k))$, or
  $\lead(H(r,k)) = \lane_i(r,k)$ and
  $(\lane_i(r,k), \plan_i(r,k))$ is compatible with all the moves in $
    \sigma(H(r,k))$.

  Because agent $i$ does not go after time $m$, we have that $\front_i(r,m)
    = 1$ and that $(\lane_i(r,k), \allowbreak \plan_i(r,k)) = (\lane_i(r,m), \plan_i(r,m))$.
  Since $\gamma$ is  $\sigma$-aware and the agent
  knows the values of $\front_i$, $\lane_i$ and $\plan_i$ (which
  are encoded in its local state),
  it follows
  that  either $\I_{\gamma,P},(r,k) \models K_i(\front_i \land
    (\lane_i,\plan_i)\in \sigma)$
  or $\I_{\gamma,P},(r,k) \models K_i(\front_i \land V_i)$.
  We obtain from this that $P_i(r_i(k)) = \go$, contradicting  $\I_{\gamma,P},(r,k) \models \neg \going_i$.
\end{proof}

Note that if $\lead$ is fair, and the $\sigma_\emptyset$ is the (unfair) intersection policy defined by
$\sigma_\emptyset(h) = \emptyset$ for all histories $h$, then the pair $(\sigma_\emptyset,\lead)$ is fair.
For examples in which $\sigma$ is not trivial, consider the following properties of $\sigma$.
Say that  $\sigma$ is \emph{cyclic} (with cycle length $k$)  if for all histories
$h$ and $h'$  with $|h| \equiv |h'| \mod k$, we have $\sigma(h) = \sigma(h')$.
Say that $\sigma$ is \emph{non-excluding} if for all moves $(\ell, \ell')$,
there exists a history $h$ such that
$(\ell, \ell')$ is compatible with all moves in $\sigma(h)$.
Given a non-excluding $\sigma$ with cycle length $k$,
let $\lead$ be defined by $\lead(h) = \lfloor h/k\rfloor \mod k$. Then $(\sigma,\lead)$ is fair.
This is because the value of $\lead$ cycles through all values in $\Lanesin$,
but is held constant through each cycle of $\sigma$. Thus, for each move $(\ell, \ell')$,
eventually a point in these combined cycles will be reached for which the value of $\lead$ is $\ell$ and
$(\ell, \ell')$ is compatible with all moves permitted by $\sigma$.

\subsection{Implementing $\kbp$ when there is no communication}
\label{sec:impl-no-comm}
We now consider standard implementations of $\kbp$ in two particular
contexts of interest. Since we would like the implementations to be correct and
lexicographically optimal, we use $\lead$ and $\sigma$
defined as $\lead(h) = m \mod |\Lanesin|$ and
$\sigma(h) = \emptyset$ for all histories $h$ of length $m$.
Using this choice of $\lead$ and $\sigma$ in the construction of $\kbp$
ensures that in any synchronous intersection context, both
$\lead$\emph{-awareness} and $\sigma$\emph{-awareness} hold; moreover,
the pair $(\lead,\sigma)$ is fair.
Therefore, implementations $P$ of $\kbp$ in such contexts
are correct and lexicographically
optimal, by  Propositions~\ref{prop:optimal} and \ref{propn:sigmaleadlive}.

We have taken $\sigma$ to be empty for ease of exposition.
For practical implementations, the construction given
by the proof of Proposition~\ref{prop:optimal} can be used to get other
implementations that prioritize moves permitted by $\sigma$.
(For example, in an
intersection where certain lanes are often busier, moves originating from those
lanes can be prioritized.)
Note that
for empty $\sigma$,
the condition $K_i(\front_i \land ((\lane_i,\plan_i)
  \in \sigma \lor V_i))$ reduces to
$K_i(\front_i \land V_i')$, where $V_i'$ is the proposition
\begin{quote}
  ``the move $(\lane_i,\plan_i)$ is compatible with all moves
  $(\lane_j,\plan_j)$
  of agents $j \neq i$ with $\lane_j \in [\lead, \lane_i)$
  such that $\going_j$'',
\end{quote}
since $\sigma$ is empty.
Consider the following context with no communication.
Let $\gamma_{\emptyset}$ be a synchronous intersection context where agents do
not broadcast messages. Formally,
for a failure model $\F$, we define
$\gamma_\emptyset(\F) =  (\exchange_\emptyset, \F, \pi_\emptyset)$, where
\begin{itemize}
  \item $(\exchange_\emptyset)_i$ is an
        information-exchange protocol
        where the following hold:
        \begin{itemize}
          \item The
                set of memory states is a singleton so, effectively,
                local states consist only of the sensor reading
                $L_i = \Sigma_i$.
          \item No messages are sent, so $M_i = \emptyset$,
                $\mu_i$ is the
                constant function with value $\bot$, and $\delta_i$ is omitted.
          \item The sensor model is defined as in the
                definition of intersection contexts. The only modification is
                that the sensor model now maps environment states to tuples of
                the form $\langle \front_i, \lane_i, \plan_i,
                  \mathit{time}_i\rangle$, where $\mathit{time}_i$ is determined by
                the time
                encoded in the environment state.
        \end{itemize}
  \item $\pi_\emptyset$ interprets the propositions defined for
        intersection
        contexts in the obvious way.
\end{itemize}

We now define a procedure to compute a set $\poss_i$  of moves that
agent $i$ believes may be performed as a function of
$\lead$ and the structure of the intersection represented by
$\noconflict$.
We capture stages of the
construction of this
set as sets of moves $\poss_I^l$ for $l \in  [\lead-1, \lane_i)$.
(By $\lead$-awareness, $\lead$ is computable from the agent's local state. For
brevity, we interpret $\lead-1$ as $\lead -1 (\mod |\Lanesin|)$.)
\begin{enumerate}
  \item Start with $\poss_i = \poss_i^{\lead-1} = \emptyset$
  \item For $l \in [\lead, \lane_i)$ do
        \begin{enumerate}
          \item Let $L$ be the set of moves $(l,l')$ where $l'\in \Lanesout$ such that
                $(l,l')$ is compatible with $\poss_i$, and let
                $\poss_i^l := \poss_i \cup L$ and $\poss_i := \poss_i^l$.
        \end{enumerate}
  \item Output $\poss_i$.
\end{enumerate}

Let $P^\emptyset$ be the standard protocol given by the following program,
where
move $(l,l')$ is compatible with a set of moves $S$ if it is compatible
with all moves in $S$ according to $\noconflict$.

\begin{program}[H]
  \DontPrintSemicolon
  \lIf{$\front_i \land
      (\lane_i,\plan_i) \text{ is compatible with } \poss_i$}{$\go$}
  \lElse{$\noop$}
  \caption{$P^\emptyset_i$}
\end{program}
\begin{propositionrep}\label{pro:impelement1}
  $P^\emptyset$ implements $\kbp$ with respect to $\gamma_\emptyset(\F)$
  for $\F \in \{\nofail, \CR, \SO\}$.
\end{propositionrep}
\begin{proof}
  Let $\I = \I_{\gamma_\emptyset(\F),\kbp}$.
  We show that
  $P_i^\emptyset(r_i(m)) = \kbp_i^\I(r_i(m))$.
  We consider three cases.
  \begin{itemize}
    \item If $\I,(r,m) \models \neg\front_i$, then we have
          $P_i^\emptyset(r_i(m)) = \kbp_i^\I(r_i(m)) = \noop$.
    \item If $\I,(r,m) \models \front_i \land (\lane_i,\plan_i) \text{ is
              incompatible with } \poss_i$,
          then $P_i^\emptyset(r_i(m)) = \noop$.
          Now consider
          $(l,l') \in \poss_i(r,m)$
          such that the move $(l,l')$ is incompatible
          with $(\lane_i(r,m),\allowbreak\plan_i(r,m))$.
          Let $j \neq i$ be any agent.
          Observe that by the definition of $\poss_i(r,m)$,
          $l \in [\lead, \lane_i)$.
          Since there is no message exchange,
          there exists a point $(r',m)$ where this agent $j$ is waiting in
          front of lane $l$, and has $\plan_j(r',m) = l'$, such that
          $(r,m) \sim_i (r',m)$. Hence,
          $\I,(r,m) \models \neg K_i(V_i')$, so
          $\kbp_i^\I(r_i(m)) = \noop$.
    \item If $\I,(r,m) \models \front_i \land (\lane_i,\plan_i) \text{ is
              compatible with } \poss_i$,
          then $P_i^\emptyset(r_i(m)) = \go$.
          Let $(r,m) \sim_i (r',m')$. Then, by the definition of the sensor readings, we have
          $m = \Time_i(r',m) =\Time_i(r,m') = m'$,
          $\front_i(r',m) =\front_i(r,m)$,
          $\lane_i(r',m) =\front_i(r,m)$, and   $\plan_i(r',m) =\plan_i(r,m)$.
          Since $\first_i$ depends only on  these variables, and  $\I,(r,m) \models \first_i$,
          we have $\I,(r',m) \models \first_i$.
          The proposition $V_i'$ must hold at $(r',m)$, since
          at $(r',m)$ there can be no agent $j$ with $\lane_j$ in $[\lead, \plan_i)$
          whose move can conflict with $i$'s move.
          Hence, for all points $(r',m) \sim_i (r,m)$, $\I,(r',m) \models \front_i \land V_i'$,
          so $\I(r,m) \models  K_i(\front_i \land V_i')$ and  $\kbp_i^\I(r_i(m)) = \go$.
  \end{itemize}
  Therefore, $P^\emptyset$ implements $\kbp$ in $\gamma_\emptyset$.
\end{proof}

Proposition~\ref{pro:impelement1} shows that, without communication, a
protocol that essentially
implements traffic lights is lexicographically optimal.

\subsection{Implementing $\kbp$ in a context with limited communication}
If we allow messages regarding the current lane and agents' intentions
by agents that reach the front,
this
changes how implementations of $\kbp$ behave.
Roughly speaking, in runs where the intersection gets crowded, a much
larger set of agents can proceed through the intersection.
Let $\gamma_{\plan}$ be a synchronous context with communication
failures such that if an agent is in the front of some lane, it broadcasts
$(\lane,\plan)$.
(This information exchange broadcasts a lot less information than a full-information exchange.)
More formally,\footnote{This context satisfies the
  \emph{sufficiently rich} condition of Section~\ref{sec:unnecessary-waiting}.}
for a failure model $\F$, we define $\gamma_\plan(\F) =  (\exchange_\plan, \F, \pi_\plan)$, where
\begin{itemize}
  \item $(\exchange_\plan)_i$ is defined as an information-exchange protocol
        where the following hold:
        \begin{itemize}
          \item The local states maintain a set of moves $M$
                in the memory component  in addition to the sensor readings. Intuitively, this set
                represents the set of moves from broadcasts that were
                received by $i$ in the current round.
                Note that $M_i$ may not contain $i$'s move
                since $i$'s broadcast may fail.
          \item
                The set of messages is $M_i =  \Lanesin\times \Lanesout$, and
                $\mu_i$  broadcasts
                the message $(\lane_i,\plan_i)$ by
                reading $\lane_i$ and $\plan_i$ from the
                sensor reading,
                if $\front_i$, and broadcasts no message otherwise. Note that these variable references are from
                $\sense(s_e')$ where $s_e'$ is the new environment state that the system moves to
                in the course of the round.
          \item The sensor model is defined as in the
                definition of intersection contexts
                (while including $\mathit{time}$ as a sensor reading as
                in $\exchange_\emptyset$).
          \item $\delta_i$ maps the set of received messages directly into
                the memory with $i$'s own move; that is, $\delta_i(s_i, a,
                  \mathit{Mes}) = \mathit{Mes}$.
                Note that an agent can determine from this set whether its own broadcast was successful.
        \end{itemize}
  \item $\pi_\plan$ interprets the propositions defined for interpretation
        contexts in the obvious way.
\end{itemize}

We now
proceed as in Subsection~\ref{sec:impl-no-comm} and define a procedure to
compute from an agent $i$'s local state $s_i=(M_i,(\lane_i,\plan_i,\Time_i))$ a
set $\poss_i$  of moves that agent $i$ believes may be performed by
higher-priority agents in the next round.
We again
capture stages of the construction of this set as sets of moves $\poss_I^l$ for
$l \in  [\lead-1, \lane_i)$.
\begin{enumerate}
  \item Start with $\poss_i = \poss_i^{\lead-1} = \emptyset$
  \item For $l \in  [\lead, \lane_i)$ do
        \begin{enumerate}
          \item If for some $l' \in \Lanesout$, the move $(l,l')$ is in $M_i$
                then \\
                \hspace*{2mm} if $(l,l')$ is compatible with $\poss_i$\\
                \hspace*{2mm} then $\poss_i^l := \poss_i \cup \{(l,l')\}$ and $\poss_i
                  := \poss_i^l$\\
                \hspace*{2mm} else $\poss_i^l := \poss_i$.
          \item Otherwise, let $L$ be the set of moves $(l,l')$ where $l'\in \Lanesout$
                such that $(l,l')$ is compatible with
                $\poss_i$, and let
                $\poss_i^l := \poss_i \cup L$ and $\poss_i := \poss_i^l$.\footnote{Intuitively, since $M$ is the set of
                  moves that $i$ hears about from agents in the front of some lane,
                  in this case $i$ did not hear from anyone in lane $l$. However,
                  in settings with sending omissions,
                  there may nevertheless be an agent at the front of lane $l''$.
                  Such an agent will move only if it can do so safely.}
        \end{enumerate}
  \item Output $\poss_i$.
\end{enumerate}
Let the output of running this procedure on a local state with memory
state $M_i$ be denoted by $\poss_i$, and
let $P^\plan$ be the standard protocol defined using the following program: \\
\begin{program}[H]
  \DontPrintSemicolon
  \lIf{$\front_i \land (\lane_i,\plan_i) \text{ is compatible with } \poss_i$}{$\go$}
  \lElse{$\noop$}
  \caption{$P^\plan_i$}
\end{program}

\begin{toappendix}
  \begin{lemma} \label{lem:sameposs}
    For $\F \in \{CR, SO\}$, suppose that $(r,m)$ is a point of $P^\plan$ in context $\gamma_\plan(\F)$,
    and let $i$ and $j$ be agents with $ \front_i(r,m) = \front_j(r,m) = 1 $ and
    $\lane_j(r,m) \in [\lead(r,m),\lane_i(r,m))$. Then $M_i(r,m) = M_j(r,m)$,
    and
    if $l \in [\lead(r,m),\allowbreak\lane_j(r,m))$ then $\poss_j^l(r,m) = \poss_i^l(r,m)$.
  \end{lemma}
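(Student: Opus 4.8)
The plan is to prove the two assertions in order, first establishing $M_i(r,m)=M_j(r,m)$ and then deducing the agreement of the partial possibility sets from it. For the first assertion I would reason directly about message reception. Since $\front_i(r,m)=1$, the point $(r,m)$ is not initial (all queues are empty in an initial state), so $m\geq 1$ and $M_i(r,m)$ is exactly the set of moves $i$ received in the round ending at time $m$. In both $\CR$ and $\SO$ we have $F_r=1$ everywhere, so receivers never fail. By the definition of $\exchange_\plan$, the only agents that broadcast in a round are those at the front of some lane, and such an agent $k$ broadcasts exactly the move $(\lane_k,\plan_k)$; since each lane has at most one front agent, there is at most one broadcast move per lane. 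Both $i$ and $j$ occupy queue position $0$ at time $m$, so the transmission assumption $((\ell,0),(\ell',0))\in T$ for all $\ell,\ell'\in\Lanesin$ (taking $\ell=\ell'$ for the self-loop) guarantees that every front agent is heard by every front agent, barring transmitter failure. Hence $i$ and $j$ receive the move of any front agent $k$ under exactly the same condition, namely that $k$'s transmitter did not fail that round, which is independent of the receiver; I would conclude that $M_i(r,m)$ and $M_j(r,m)$ coincide.

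For the second assertion I would run the two $\poss$-procedures side by side and induct on $l$ in the cyclic order $\lead,\lead+1,\ldots$. Both procedures initialize $\poss^{\lead-1}=\emptyset$ and visit lanes in the same order starting at $\lead$, differing only in the loop bound ($\lane_i$ versus $\lane_j$); since $\lane_j\in[\lead,\lane_i)$, the range $[\lead,\lane_j)$ is a common prefix of both loops. The loop body at a lane $l$ consults only the running set $\poss$, the memory $M$, the fixed relation $\noconflict$, and $l$ itself, never $\lane_i$ or $\lane_j$ directly. Feeding the common memory $M_i(r,m)=M_j(r,m)$ from the first part into both computations, an easy induction shows the running sets remain equal, yielding $\poss_j^l(r,m)=\poss_i^l(r,m)$ for every $l\in[\lead(r,m),\lane_j(r,m))$.

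The step I expect to be the main obstacle is the first one: one must handle the message-reception semantics carefully, in particular combining the no-receiver-failure guarantee of $\CR$ and $\SO$ (which makes reception depend solely on the senders' transmitter status) with the completeness of the transmission environment among front positions (so that every front agent hears every other front agent, and itself). Once $M_i(r,m)=M_j(r,m)$ is in hand, the second assertion is essentially bookkeeping, since the $\poss$-construction is a deterministic function of $M$, $\noconflict$, and the lane ordering, all of which $i$ and $j$ share on the common prefix.
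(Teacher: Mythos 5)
Your proof is correct and follows essentially the same route as the paper's: both arguments establish $M_i(r,m)=M_j(r,m)$ by combining the facts that only front agents broadcast, that the transmission environment connects all front positions, and that $\CR$ and $\SO$ have no reception failures (so reception depends only on the sender's transmitter), and then both conclude equality of the $\poss^l$ sets by a routine induction over the common prefix of the lane ordering. Your write-up is somewhat more explicit than the paper's about the reception semantics and the determinism of the $\poss$-construction, but there is no substantive difference.
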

  \begin{proof}
    By a straightforward induction on the construction of the sets $\poss_j^l(r,m)$ and $\poss_i^l(r,m)$.
    Fix agents $i$ and $j$.
    Note that because
    \begin{itemize}
      \item in $\gamma_\plan(\F)$, only agents at the front of the queue for a lane after a round
            broadcast during that round, and
      \item
            we assume that the transmission environment $T$ has the property that messages sent from
            the front of the queue for a lane are (barring failures)  received by all agents also  at the
            front of the queue for some lane, and
      \item in the failure models $CR$ and $SO$,
            there are no reception failures, so
            all broadcasts are atomic: either received by all agents at the front or received by none,
    \end{itemize}
    we have that all agents at the front receive the same set of messages, which
    become their memory state after the round, so that $M_i(r,m) = M_j(r,m)$.
    Agents $i$ and $j$ also compute the same value for $\lead$. This implies that
    if $l \in
      [\lead(r,m),\lane_j(r,m))$ then $\poss_j^l(r,m) = \poss_i^l(r,m)$.
  \end{proof}

  \begin{lemma}\label{lem:Pplan-poss}
    For $\F \in \{CR, SO\}$, suppose that $(r,m)$ is a point of $P^\plan$ in context
    $\gamma_\plan(\F)$, and let $i$ be an agent with $\front_i(r,m) = 1$
    and $(l,l')$ a move with $l \in [\lead(r,m),\allowbreak\lane_i(r,m))$.
    Then $(l,l') \in \poss_i(r,m)$ iff $\I_{P^\plan,\gamma_\plan(\F)}, \allowbreak
      (r,m) \models \neg K_i\neg ( \exists j\in \Agents(
        \front_j \land \lane_j = l \land \plan_j = l' \land \going_j))$.
  \end{lemma}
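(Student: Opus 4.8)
The plan is to prove the biconditional by connecting membership of $(l,l')$ in $\poss_i$ to whether the protocol $P^\plan$ makes an agent at the front of lane $l$ with intent $l'$ execute $\go$. First I would record the easy but crucial observation that the set $\poss_i(r,m)$ is computed purely from $i$'s memory $M_i$ and the value of $\lead$, both of which are encoded in $i$'s local state; hence $\poss_i$, along with all the intermediate sets $\poss_i^{l''}$, takes the same value at any two points related by $\sim_i$. In particular the left-hand side of the claimed equivalence is itself a fact that $i$ knows, which is what lets it match the epistemic possibility statement $\neg K_i\neg\psi$ on the right, where $\psi$ abbreviates $\exists j\in\Agents(\front_j \land \lane_j=l \land \plan_j=l' \land \going_j)$.

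The main engine of the proof is Lemma~\ref{lem:sameposs}. For any agent $j$ sitting at the front of lane $l \in [\lead,\lane_i)$ at a point $(r',m)$, that lemma gives $M_j(r',m)=M_i(r',m)$ and $\poss_j^{l''}(r',m)=\poss_i^{l''}(r',m)$ for every $l'' \in [\lead,l)$; consequently $j$'s final set $\poss_j(r',m)$ equals $\poss_i^{l-1}(r',m)$, the value $i$ holds immediately before processing lane $l$. By the definition of $P^\plan$, such a $j$ performs $\go$ — that is, $\going_j$ holds — exactly when $(l,l')$ is compatible with $\poss_i^{l-1}(r',m)$. This reduces both sides of the equivalence to the single compatibility test ``$(l,l')$ compatible with $\poss_i^{l-1}$,'' modulated by what $i$ hears from lane $l$; throughout I would use that in $\CR$ and $\SO$ there are no reception failures, so broadcasts are atomic and every front agent (including $i$ and any such $j$) receives the same message set.

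For the ($\Leftarrow$) direction, take a witness point $(r',m)\sim_i(r,m)$ at which some $j$ is at the front of lane $l$ with intent $l'$ and $\going_j$ holds; by locality $\poss_i$ agrees at $(r,m)$ and $(r',m)$, so it suffices to place $(l,l')$ in $\poss_i(r',m)$. From $\going_j$ and the reduction above, $(l,l')$ is compatible with $\poss_i^{l-1}$. If $j$'s transmitter succeeded, then $M_i$ contains $(l,l')$ (and no other move from lane $l$, since at most one agent fronts lane $l$), so the first branch of the procedure adds the compatible $(l,l')$; if $j$'s transmitter failed, $i$ hears nothing from lane $l$ and the ``otherwise'' branch adds every compatible move from lane $l$, again including $(l,l')$. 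For the ($\Rightarrow$) direction, suppose $(l,l') \in \poss_i(r,m)$ and split on what $i$ heard from lane $l$. If $i$ heard $(l,l')$ itself, then the \emph{actual} run $r$ already has a front-of-lane-$l$ agent $j$ with intent $l'$, and since $(l,l')$ was added it is compatible with $\poss_i^{l-1}=\poss_j$, so $j$ goes and $\psi$ already holds at $(r,m)$. If $i$ heard nothing from lane $l$ (the ``otherwise'' branch), I would build an indistinguishable run $r'$ by using the adversary's freedom over the arrival schedule to place an agent $j$ at the front of lane $l$ with intent $l'$ whose transmitter fails in round $m$ (allowed in both $\CR$ and $\SO$); since $l\neq\lane_i$ this does not perturb $i$'s queue or its received messages, so $M_j=M_i$ and, by compatibility, $j$ goes, yielding $\neg K_i\neg\psi$.

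The step I expect to be the main obstacle is keeping the constructions and received-message sets mutually consistent. In the ($\Rightarrow$) ``otherwise'' case I must check that the run $r'$ introducing a silent front-of-lane-$l$ agent is genuinely realizable under the adversary model without altering any component of $i$'s local state; this rests on the independence of distinct lane queues ($l\neq\lane_i$) and on a failed transmitter — available in both $\CR$ and $\SO$ — suppressing exactly the message that would otherwise perturb $M_i$, while $j$'s working receiver (no reception failures) still yields $M_j=M_i$. Dually, in the ($\Leftarrow$) direction one must use that the unique front agent of lane $l$ is pinned down by what $i$ receives: a witness forces $M_i$ either to contain exactly $(l,l')$ or to contain no move from lane $l$, which is what routes the argument into the correct branch of the procedure and rules out a spurious ``different intent'' scenario. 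It is precisely this atomicity of broadcasts in $\CR$ and $\SO$ — $M_j=M_i$ for all front agents — that makes the reduction through Lemma~\ref{lem:sameposs} valid, and is why the lemma is confined to these two failure models.
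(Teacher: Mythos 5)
Your overall strategy is the same as the paper's: use Lemma~\ref{lem:sameposs} to identify $\poss_j(r',m)$ with $\poss_i^{l-1}(r',m)$ for any front agent $j$ of lane $l$, thereby reducing $\going_j$ to the compatibility test that the procedure itself performs, and then handle the two ways $(l,l')$ can enter $\poss_i$ (heard from lane $l$, or heard nothing). Your ($\Leftarrow$) direction is a correct direct rendering of what the paper does by contradiction, and your ($\Rightarrow$) case where $(l,l')\in M_i(r,m)$ is fine: the witness already exists in $r$ itself.

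The gap is in the ($\Rightarrow$) case where $i$ hears nothing from lane $l$. You justify the existence of the indistinguishable witness run by ``the independence of distinct lane queues ($l\neq\lane_i$),'' but the queues are \emph{not} independent in this model: every front agent's action under $P^{\plan}$ depends on the messages it receives from \emph{all} lanes, so lane $l$'s history feeds into the evolution of every other queue. Your perturbation is genuinely local only in the special case where lane $l$ is empty at time $m$ in $r$, so that a fresh agent $j$ with $\tau(j)=(m,(l,l'))$ and a failed transmitter can be inserted without touching anything else. If lane $l$ is occupied at time $m$ in $r$ by a front agent with a different intent (whose round-$m$ broadcast failed, which is why $M_i$ has no lane-$l$ entry), you cannot simply ``place'' your witness at the front of lane $l$: changing that agent's intent or arrival alters its successful broadcasts in \emph{earlier} rounds, which changes other front agents' memories, hence their $\go$ decisions, hence the queue contents and possibly $M_i$ and $\front_i$ at time $m$. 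The paper avoids this cascade entirely by building $r'$ from scratch: no arrivals before round $m$, then in round $m$ exactly the agents whose moves appear in $M_i(r,m)$ arrive (with working transmitters), together with $i$ and, when $(l,l')\notin M_i(r,m)$, a silent extra agent $i^*$ in lane $l$ with intent $l'$ and $F'_t(m-1,i^*)=0$. This forces $(r,m)\sim_i(r',m)$ by construction rather than by a perturbation argument. Your proof needs this (or an equivalent global construction) to close; the rest stands.
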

  \begin{proof}
    Let $\I = \I_{P^\plan,\gamma_\plan(\F)}$. We first suppose that $(l,l') \in
      \poss_i(r,m)$ and show that $\I,(r,m) \models \neg K_i\neg ( \exists j\in
      \Agents(\front_j \land \lane_j = l \land \plan_j = l' \land \going_j))$.
    Let $r'$ be the run constructed as follows. If the adversary in run $r$ is
    $\adversary = (\arrival, \canhear, F_t, F_r)$, the adversary $\adversary' =
      (\arrival', \canhear, F'_t, F_r)$ in run $r'$ satisfies the following. The
    transmission environment $\canhear$ and the reception failures $F_r$ are
    identical to those in $\adversary$. With respect to arrivals
    $\arrival'$, no agents arrive at the intersection until round $m$. In round $m$,
    we have the following arrivals:
    \begin{itemize}
      \item Agent $i$ arrives in lane $\lane_i(r,m)$ with intent
            $\plan_i(r,m)$.
      \item If $j\neq i$ is an agent with $(\lane_j(r,m),\plan_j(r,m)) \in
              M_i(r,m)$, then agent $j$ arrives in lane $\lane_j(r,m)$ and intent
            $\plan_j(r,m)$.
      \item If $(l,l')\not \in M_i(r,m)$, then some other agent $i^*$ arrives in
            lane $l$ with intent $l'$.
    \end{itemize}
    There are no other arrivals in round $m$. After round $m$, arrivals can be
    scheduled arbitrarily. With respect to transmission failures, for agents $j$
    such that  $(\lane_j(r,m),plan_j(r,m)) \in M_i(r,m)$, we have $F'_t(m-1,j) = 1$,
    and if the special agent $i^*$ arrives in round $m$ of $r'$, then
    $F'_t(m-1,i^*)
      = 0$. All other aspects of $\adversary'$ can be defined arbitrarily. It is
    straightforward to show that $(r,m) \sim_i(r',m)$. Moreover, by construction, we
    have $M_j(r,m) = M_j(r',m)$ for all agents $j$ with $\front_j(r,m) = 1$.
    Indeed, $r_j(m) = r'_j(m)$.
    We show that $\I,(r',m) \models   \exists j\in \Agents(\front_j \land \lane_j =
      l \land \plan_j = l' \land \going_j)$.

    There are two cases. First, consider the case where $(l,l') \in M_i(r,m)$. In
    this case, there exists an agent $j \neq i$ such that $\I,(r,m) \models \front_j
      \land \lane_j = l \land \plan_j = l'$, hence also $\I,(r',m) \models \front_j
      \land \lane_j = l \land \plan_j = l'$. We also have
    that
    either $l = \lead(r,m)$ or
    $(l,l')$ is compatible with $\poss_i^{l-1}(r,m)$. If $l = \lead(r,m)$ then
    $\poss_j(r,m) = \emptyset$, and it is immediate that $(l,l')$ is compatible with
    $\poss_j(r,m)$, so $P^\plan_j(r_j(m)) = \go$ and $\I(r',m) \models \going_j$.
    On the other hand, if $(l,l')$ is compatible with $\poss_i^{l-1}(r,m)$, then by
    Lemma~\ref{lem:sameposs}, we have that $\poss_i^{l-1}(r,m)= \poss_j^{l-1}(r,m)$.
    Since $r_j(m) = r'_j(m)$, we also have $\poss_j^{l-1}(r,m)= \poss_j^{l-1}(r',m)=
      \poss_j(r',m)$. Thus, $(l,l')$ is compatible with $\poss_j(r,m)$, and it follows
    again that $P^\plan_j(r_j(m)) = \go$  and $\I,(r',m) \models \going_j$. Hence
    $j$ provides the required witness.

    Second, consider the case where $(l,l') \not \in M_i(r,m)$. In this
    case, we have
    $\I,(r',m) \models \front_{i^*} \land \lane_{i^*} = l \land \plan_{i^*} = l'$.
    Also $F'_t(m-1,i^*) =0$, and no other agent can be at the front of
    lane $l$. By definition of $\gamma_\plan(\F)$, no other agent can
    broadcast a move from lane $l$, so we do not have $(l,l'') \in M_i(r,m)$ for any
    $l''$. The reason that $(l,l') \in \poss_i(r,m)$ must therefore be case 2(a) of
    the construction. If $l = \lead(r'm)$ then $\poss_{i^*}(r',m) = \emptyset$, and
    it is immediate that $(l,l')$ is compatible with $\poss_{i^*}(r',m)$. It follows
    that $\I(r',m) \models \going_{i^*}$. If $l \neq \lead(r',m)= \lead(r,m)$ then
    $(l,l') \in \poss_i(r,m)$ because $(l,l')$ is compatible with
    $\poss_{i}^{l-1}(r,m)= \poss_{i}^{l-1}(r',m)$.
    By Lemma~\ref{lem:sameposs}, we have that $\poss_i^{l-1}(r',m)=
      \poss_{i^*}^{l-1}(r',m)$. Because $\lane_{i^*}(r',m) = l$, we furthermore have
    $\poss_{i^*}^{l-1}(r',m) = \poss_{i^*}(r',m)$. Thus, $(l,l')$ is compatible
    with $\poss_{i^*}(r',m)$, and it follows that $P^\plan_{i^*}(r_{i^*}(m)) = \go$
    and $\I(r',m) \models \going_{i^*}$. Hence $j=i^*$ provides the required witness.
    This completes the proof of the implication from left to right.

    For the converse, assume that  $\I,(r,m) \models \neg K_i\neg ( \exists j\in
      \Agents(\front_j \land \lane_j = l \land \plan_j = l' \land
        \going_j))$.  We show that
    $(l,l') \in \poss_i(r,m)$. There exists a run $r'$ of $P^\plan$
    such that $(r,m) \sim_i (r',m)$ and $\I,(r',m) \models \front_j \land \lane_j =
      l \land \plan_j = l' \land \going_j$ for some agent $j$. By definition of local
    states, we have $\lane_i(r,m) =\lane_i(r',m)$, $\plan_i(r,m) = \plan_i(r',m)$,
    and $M_i(r,m) = M_i(r',m)$. Suppose that $(l,l') \not \in \poss_i(r,m) =
      \poss_i(r',m)$. We cannot have $l = \lead(r',m)$, otherwise $(l, l')$
    would be compatible with $\poss_i(r',m) = \emptyset$, and we would
    have $(l,l') \in
      \poss_i(r,m) =
      \poss_i(r',m)$, contrary to assumption. It follows (both in the case that
    $(l,l') \in M_i(r',m)$ and the case that $(l,l') \not \in M_i(r',m)$, for which
    there can be no $l''$ with $(l,l'') \in M_i(r',m)$, by the assumptions on
    $\gamma_\plan(\F)$) that there exists lanes $p\in [\lead(r',m),l)$ and $p'\in
      \Lanesout$ such that $(p,p') \in \poss_i^{l-1}(r',m)$ and $(p,p')$ and $(l,l')$
    are not compatible. Since $ \poss_i^{l-1}(r',m) = \poss_j^{l-1}(r',m)$, by
    Lemma~\ref{lem:sameposs}, and $\poss_j^{l-1}(r',m) = \poss_j(r',m)$, we have
    that $(l,l')= (\lane_j(r',m),\plan_j(r',m))$ is not compatible with
    $\poss_j(r',m)$. This means that $P^\plan_j(r_j(m)) = \noop$, so $\I,(r',m)
      \models \neg\going_j$, a contradiction. Thus, we must have $(l,l') \in
      \poss_i(r,m)$.
  \end{proof}
\end{toappendix}

\begin{propositionrep}
  $P^\plan$ implements $\kbp$ with respect to $\gamma_\plan(\F)$ for $\F \in
    \{CR, SO\}$.
\end{propositionrep}
\begin{proof}
  We show that for all points $(r,m)$ of $\I = \I_{P^\plan,\gamma_\plan(\F)}$,
  we have $P^\plan_i(r_i(m)) = \kbp^\I_i(r_i(m))$.
  If $\I,(r,m)\models \neg \front_i$, then we have $P^\plan_i(r_i(m)) = \noop = \kbp^\I_i(r_i(m))$.
  Suppose that $\I,(r,m)\models \front_i$.

  We first assume that $P^\plan_i(r_i(m)) = \go$ and show that $\kbp^\I_i(r_i(m))= \go$,
  By assumption,  $(\lane_i(r,m), \plan_i(r,m))$ is compatible with
  $\poss_i(r,m)$. By way of contradiction, suppose that $\kbp^\I_i(r_i(m)) \neq \go$.
  Then $\I,(r,m) \models \neg K_i(V_i')$, so there exists a point $(r',m) \sim_i (r,m)$
  and an agent $j$ with $\I,(r',m) \models j\in \GO$ and a move $(l,l')$ such that
  $\lane_j(r'm) = l$ and $\plan_j(r',m) = l'$ and $l \in [\lead,\lane_i(r',m))$
  and the move $(l,l')$ is not compatible with  $(\lane_i(r,m), \plan_i(r,m))$.
  By definition of $P^\plan$, the fact that $\I,(r',m) \models j\in \GO$  implies that
  $(l,l')$ is compatible with  $\poss_j(r',m) = \poss^{l-1}_j(r',m)$.
  By Lemma~\ref{lem:sameposs} and the fact that $(r,m) \sim_i (r',m)$,
  we have $\poss^{l-1}_j(r',m) = \poss^{l-1}_i(r',m)$. Hence $(l,l') \in \poss_i(r,m)$.
  But this means that  $(\lane_i(r,m), \plan_i(r,m))$ is not compatible with
  $\poss_i(r,m)$, hence  $P^\plan_i(r_i(m)) \neq \go$, a contradiction.

  For the converse, we suppose that $\kbp^\I_i(r_i(m))= \go$ and $P^\plan_i(r_i(m)) \neq \go$,
  and again derive a contradiction.  From $P^\plan_i(r_i(m)) \neq \go$,
  we have that $(\lane_i(r,m),\plan_i(r,m))$ is incompatible with some move
  $(l,l')\in \poss_i(r,m)$, for which $l \in [\lead,\lane_i(r,m))$.  By Lemma~\ref{lem:Pplan-poss},
  there exists a point $(r',m) \sim_i(r,m)$ with
  $\I,(r',m) \models  \exists j\in \Agents( \front_j \land \lane_j = l \land \plan_j = l' \land j\in \GO))$.
  Because $\kbp^\I_i(r_i(m))= \go$, we have $\I,(r,m) \models K_i(V'_i)$,
  hence $\I,(r',m) \models V'_i$. It follows that $(l,l')$ is compatible with $(\lane_i(r',m), \plan_i(r',m))$.
  But  $(\lane_i(r',m), \plan_i(r',m)) = (\lane_i(r,m),\plan_i(r,m))$, so we have a contradiction.
\end{proof}

Again, by Proposition~\ref{prop:optimal}, it follows that the intersection protocol $P^\plan$ is lexicographically optimal with respect to the contexts $\gamma_\plan(\F)$ for
for $\F \in \{CR, SO\}$.

\section{Discussion} \label{sec:discussion}
We introduced the \emph{intersection problem}, identified the appropriate
notion of optimality called \emph{lexicographical optimality}, and
designed protocols that are optimal in a variety of contexts.
A knowledge-based analysis and the use of \emph{intersection policies}
were crucial in this process.

Previous work has considered many models ranging from computing
individual trajectories of vehicles to relying on centralized schedulers
\cite{CE16}.
In \cite{ssp17,ciscav21}, a four-way intersection is considered
in a context with failures.
\cite{ffcvt10,ZSGPB18} consider \emph{virtual traffic lights}; the
approach is evaluated using a large-scale simulation.
\cite{fvpbk13} solves the same
problem
probabilistically, in contexts with failures.
Work in the control theory literature has focused on vehicle dynamics when
going through an intersection \cite{HCCV13} to avoid collision.
Efforts have also been made to
build distributed intersection management systems through V2V
communication \cite{CKSC13}.

While there has been considerable effort in designing protocols for specific
intersections or designing architectures for intersection management systems, we
aim to develop a context- and architecture-independent approach.
Our goal in this paper is to lay the theoretical foundations of optimal
intersection protocol design in a variety of contexts, including contexts
with failures. We do so abstractly by defining the model to capture
any intersection topology with minimal requirements on V2V communication range.
While the protocols we design do not require sensors such as
lidar and radar, the use of a knowledge-based program $\kbp$
provides a direct method to develop optimal implementations
in contexts with extra sensors.

The problem we study in this paper can be viewed as a generalization of the
classical problem of \emph{mutual exclusion}, which requires that two distinct
agents are not simultaneously in a \emph{critical section} of their code.
Indeed, a variant of mutual exclusion called \emph{group mutual
  exclusion}
\cite{Joung00} is strictly weaker
than the intersection problem.
In group mutual exclusion, each process is assigned a session when entering the
critical section and processes are allowed to enter the critical section
simultaneously provided that they share the same session. If agents form an
equivalence relation based on their
move
compatibility according to $\noconflict$, we
can identify each equivalence class to be in the same session and think of the
intersection as the critical section.
However, our setting differs in some critical ways:
\begin{itemize}
  \item Intersections often have an $\noconflict$ relation that is not an
        equivalence
        relation. For instance, the fact that agents' moves conflict in lanes
        A-B and in lanes B-C does
        not imply that their moves in lanes A and C conflict
        (e.g., if agents want to move straight in a
        four-way intersection with two lanes in each direction).
  \item We take the set $\Agents$ of agents  to be unbounded, while
        group mutual exclusion (and equivalent problems such as
        \emph{room synchronization} \cite{BCG01}) consider a bounded number of agents.
  \item Our agents arrive according to a (possibly infinite)
        schedule determined
        by the
        adversary.
  \item To the best of our knowledge, fault-tolerance has not been
        considered
        in the group mutual-exclusion setting.
\end{itemize}
The mutual-exclusion problem is generally studied with respect to an
interleaving model of asynchronous computation, but as Lamport \cite{Lamport74a}
noted, this model is not physically realistic, and already builds in a
notion of mutual exclusion between the actions of distinct agents. The
\emph{Bakery} mutual-exclusion protocol \cite{Lamport74a}
is correct with respect to models allowing
simultaneous read and write operations.
Moses and Patkin \cite{MosesP18} develop an improvement of Lamport's Bakery
algorithm for the mutual-exclusion problem
using a knowledge-based analysis,
noting that there are situations in
which Lamport's protocol could enter the critical section, but fails to do so.
A weaker knowledge-based condition for mutual exclusion is used
by Bonollo et al.~ \cite{BMS01}; it states that an agent $i$ may enter
its critical section when it knows that no other agent will enter its critical
section until agent $i$ has exited from its critical section.
Clearly these knowledge-based approaches are similar in spirit to ours.
We hope to study the exact relationship
between these problems in the near future.

There are several directions that we hope to explore in the future.
One involves extending the current results to contexts with stronger
adversaries and
evaluating implementations of $\kbp$
in other contexts.
Another is considering strategic agents, who may deviate from a
protocol to cross the intersection earlier.

\bibliography{joe,z,refs}

\end{document}